\documentclass[a4paper,UKenglish,cleveref, autoref, thm-restate, review]{lipics-v2021}
%This is a template for producing LIPIcs articles. 
%See lipics-v2021-authors-guidelines.pdf for further information.
%for A4 paper format use option "a4paper", for US-letter use option "letterpaper"
%for british hyphenation rules use option "UKenglish", for american hyphenation rules use option "USenglish"
%for section-numbered lemmas etc., use "numberwithinsect"
%for enabling cleveref support, use "cleveref"
%for enabling autoref support, use "autoref"
%for anonymousing the authors (e.g. for double-blind review), add "anonymous"
%for enabling thm-restate support, use "thm-restate"
%for enabling a two-column layout for the author/affilation part (only applicable for > 6 authors), use "authorcolumns"
%for producing a PDF according the PDF/A standard, add "pdfa"

%\pdfoutput=1 %uncomment to ensure pdflatex processing (mandatatory e.g. to submit to arXiv)
%\hideLIPIcs  %uncomment to remove references to LIPIcs series (logo, DOI, ...), e.g. when preparing a pre-final version to be uploaded to arXiv or another public repository

%\graphicspath{{./graphics/}}%helpful if your graphic files are in another directory

\bibliographystyle{plainurl}% the mandatory bibstyle

\title{Improved Ackermannian lower bound for the Petri nets reachability problem} %TODO Please add

%\titlerunning{Dummy short title} %TODO optional, please use if title is longer than one line

%\author{Sgt. Pepper}{Lonely Hearts Club Band}{}{}{}
\author{Sławomir Lasota}{University of Warsaw, Poland}{}{https://orcid.org/0000-0001-8674-4470}
{Supported by the  ERC project ‘Lipa’ within the EU Horizon 2020 research and innovation programme (No. 683080).}
% NCN grant 2017/27/B/ST6/02093.}
\authorrunning{S. Lasota} 
\Copyright{Sławomir Lasota}

%\begin{CCSXML}
%<ccs2012>
%   <concept>
%       <concept_id>10003752.10003753.10003761</concept_id>
%       <concept_desc>Theory of computation~Concurrency</concept_desc>
%       <concept_significance>300</concept_significance>
%       </concept>
%   <concept>
%       <concept_id>10003752.10003790.10011192</concept_id>
%       <concept_desc>Theory of computation~Verification by model checking</concept_desc>
%       <concept_significance>500</concept_significance>
%       </concept>
%   <concept>
%       <concept_id>10003752.10003790.10002990</concept_id>
%       <concept_desc>Theory of computation~Logic and verification</concept_desc>
%       <concept_significance>500</concept_significance>
%       </concept>
% </ccs2012>
%\end{CCSXML}

\ccsdesc[300]{Theory of computation~Concurrency}
\ccsdesc[500]{Theory of computation~Verification by model checking}
\ccsdesc[500]{Theory of computation~Logic and verification}
%mandatory: Please choose ACM 2012 classifications from https://dl.acm.org/ccs/ccs_flat.cfm 

\keywords{Petri nets, reachability problem, vector addition systems} %mandatory; please add comma-separated list of keywords

\category{} %optional, e.g. invited paper

\relatedversion{} %optional, e.g. full version hosted on arXiv, HAL, or other respository/website
%\relatedversiondetails[linktext={opt. text shown instead of the URL}, cite=DBLP:books/mk/GrayR93]{Classification (e.g. Full Version, Extended Version, Previous Version}{URL to related version} %linktext and cite are optional

%\supplement{}%optional, e.g. related research data, source code, ... hosted on a repository like zenodo, figshare, GitHub, ...
%\supplementdetails[linktext={opt. text shown instead of the URL}, cite=DBLP:books/mk/GrayR93, subcategory={Description, Subcategory}, swhid={Software Heritage Identifier}]{General Classification (e.g. Software, Dataset, Model, ...)}{URL to related version} %linktext, cite, and subcategory are optional

%\funding{(Optional) general funding statement \dots}%optional, to capture a funding statement, which applies to all authors. Please enter author specific funding statements as fifth argument of the \author macro.

%\acknowledgements{Thanks to Wojtek Czerwi{\'n}ski and {\L}ukasz Orlikowski for fruitful discussions 
%and suggesting further minor simplifications of the simplified construction presented in this paper.}%optional

\nolinenumbers %uncomment to disable line numbering

%Editor-only macros:: begin (do not touch as author)%%%%%%%%%%%%%%%%%%%%%%%%%%%%%%%%%%
\EventEditors{Petra Berenbrink and Benjamin Monmege}
\EventNoEds{2}
\EventLongTitle{39th International Symposium on Theoretical Aspects of Computer Science (STACS 2022)}
\EventShortTitle{STACS 2022}
\EventAcronym{STACS}
\EventYear{2022}
\EventDate{March 15--18, 2022}
\EventLocation{Marseille, France}
\EventLogo{}
\SeriesVolume{219}
\ArticleNo{35}
%%%%%%%%%%%%%%%%%%%%%%%%%%%%%%%%%%%%%%%%%%%%%%%%%%%%%%

% !TEX root = ack-popl.tex

\usepackage{todonotes}
\usepackage{algorithm}
\usepackage[noend]{algpseudocode}
\usepackage{xspace}
\usepackage{pgf}
\usepackage{tikz}
\usetikzlibrary{arrows,automata}

\begin{document}

\maketitle

% !TEX root = ack-stacs.tex

%\newtheorem{claim}[theorem]{Claim}

\newcommand{\computed}[3]{\text{\sc Comp}_{#1}(#2, #3)}
\newcommand{\twocol}[5]{
\begin{minipage}{#1\linewidth} #2 \end{minipage}
#3
\begin{minipage}{#4\linewidth} #5 \end{minipage}
}
\newcommand{\moveup}{\vspace{-1mm}}
\newcommand{\movedown}{\vspace{1mm}}

\newcommand{\finval}[1]{\textsc{fin}(#1)}
\newcommand{\runs}[3]{\textsc{Runs}_{#1}(#2, #3)}
\newcommand{\true}{\mathbf{true}}
\newcommand{\false}{\mathbf{false}}
\newcommand{\NP}{\text{\sc NP}}
\newcommand{\divs}[1]{\text{\sc Div}(#1)}
\newcommand{\setof}[2]{\{ #1 \mid #2 \}}
\newcommand{\set}[1]{\{#1\}}
\newcommand{\lcmpar}[1]{\text{\sc Lcm}(#1)}
\newcommand{\bin}[1]{\text{\sc Bin}(#1)}
\newcommand{\binext}[2]{\text{\sc Bin}_{#1}(#2)}
\newcommand{\size}[1]{|#1|}
\newcommand{\I}{\mathcal{I}}
\renewcommand{\O}{\mathcal{O}}
\newcommand{\N}{\mathbb{N}}
\newcommand{\Npar}[1]{\mathbb{N}_{#1}}
\newcommand{\Ne}{\Npar 4}
\newcommand{\Z}{\mathbb{Z}}
\newcommand{\R}{\mathbb{R}}
\newcommand{\Q}{\mathbb{Q}}

\newcommand{\lcm}{\textsc{LCM}(1,\ldots,n)}
\newcommand{\weakexp}[3]{\mathcal{W}_{#1}}
\newcommand{\commanda}[1]{\textbf{command}_{#1}}
\newcommand{\commandb}[1]{\textbf{command2}_{#1}}

%vectors

\newcommand{\ptime}{\textsc{PTime}\xspace}
\newcommand{\pspace}{\textsc{PSpace}\xspace}
\newcommand{\nl}{\textsc{NL}\xspace}
\newcommand{\np}{\textsc{NP}\xspace}
\newcommand{\expspace}{\textsc{ExpSpace}\xspace}
\newcommand{\exptime}{\textsc{ExpTime}\xspace}
\newcommand{\nexptime}{\textsc{NExpTime}\xspace}
\newcommand{\tower}{\textsc{Tower}\xspace}
\newcommand{\ackermann}{\textsc{Ackermann}\xspace}

\newcommand{\trans}[1]{\xrightarrow{#1}}
\newcommand{\tran}{\longrightarrow}
\newcommand{\Tran}{\Longrightarrow}

%todo
\newcommand{\slawek}[1]{\todo[inline, color=orange!50]{{\bf SL:} #1}}

\newcommand{\hp}{\mathcal{HP}}
\newcommand{\subsum}{\textsc{Subset Sum}\xspace}

%theorems

%\newtheorem{claim}{Claim}
%\newtheorem{remark}{Remark}

%programs
\newcommand{\inc}[1]{\add{#1}{1}}
\newcommand{\dec}[1]{\sub{#1}{1}}
\newcommand{\incordec}[1]{$#1 \,\, *\!\!= \, 1$}
\newcommand{\initialise}{\textbf{initialise to} $0$}
\newcommand{\coreadd}[2]{#1 \,\, +\!\!= \, #2}
\newcommand{\coresub}[2]{#1 \,\, -\!\!= \, #2}
\newcommand{\add}[2]{$\coreadd{#1}{#2}$}
\newcommand{\sub}[2]{$\coresub{#1}{#2}$}
\newcommand{\mul}[2]{$#1 \,\, \times\!\!= \, #2$}
\newcommand{\epar}[1]{e_{k, #1}}
\newcommand{\enopar}{e_k}
\newcommand{\para}[1]{\subparagraph*{#1}}
\makeatletter
\def\BState{\State\hskip-\ALG@thistlm}
\makeatother
\newcommand{\Loopexactly}[1]{\Loop\ \textbf{exactly $#1$ times}}
\newcommand{\Loopatmost}[1]{\Loop\ \textbf{at most $#1$ times}}
\newcommand{\Loopdown}[1]{\Loop\ \textbf{down $#1$}}
\newcommand{\op}{\text{op}}
\newcommand{\comp}[2]{#1\; #2}
\newcommand{\compn}[3]{#2 \rhd_{#1} #3}
\newcommand{\segm}[1]{[#1]}
\newcommand{\loopdown}[1]{\textbf{loop down} #1}

\renewcommand{\thealgorithm}{\Roman{algorithm}}
\algrenewcommand{\algorithmiccomment}[1]{\qquad$\rightarrow$ #1}
\newcommand{\goto}[2]{\textbf{goto} {\footnotesize #1} \textbf{or} {\footnotesize #2}}
\newcommand{\gotod}[1]{\textbf{goto} {\footnotesize #1}}
\newcommand{\testz}[1]{\textbf{zero?}~$#1$}
\newcommand{\testm}[1]{\textbf{max?}~$#1$}
\newcommand{\halt}{\textbf{halt}}
\newcommand{\haltz}[1]{{\halt} \textbf{if} $#1 = 0$}
\newcommand{\startz}[1]{\textbf{start with} $#1 = 0$}
\newcommand{\vr}[1]{\mathsf{#1}}
\newcommand{\tuple}[1]{\langle #1 \rangle}
\newcommand{\ass}[2]{[#1 := #2]}

\newcommand{\iszero}[1]{\textbf{iszero} $#1$}
\newcommand{\reset}[1]{\textbf{reset} $#1$}

%eg ie
\newcommand*{\eg}{e.g.\@\xspace}
\newcommand*{\ie}{i.e.\@\xspace}

\newcommand{\zerotestnew}[4]{\zerotestname\,#1} % \text{ \sc using } {#2,#3,#4}}
\newcommand{\zerotestname}{\text{\sc Zero?}}
\newcommand{\zerotest}[1]{\zerotestname(#1)}
\newcommand{\settozero}{\text{\sc Set-$\vr c$-to-zero}}
\newcommand{\PROG}[4]{
\begin{minipage}{#1\linewidth}
%\begin{algorithm}[H]
%\caption{#2}
\medskip
\underline{\bf #2:}
\label{#3}
\begin{algorithmic}[1]
#4
\end{algorithmic}
%\end{algorithm}
\end{minipage}
}
\newcommand{\PROGnoname}[3]{
\begin{minipage}{#1\linewidth}
%\begin{algorithm}[H]
%\caption{#2}
%\medskip
%\underline{\bf #2:}
\label{#2}
\begin{algorithmic}[1]
#3
\end{algorithmic}
%\end{algorithm}
\end{minipage}
}
\newcommand{\prog}[1]{\mathcal{#1}}
\newcommand{\successor}[1]{\widetilde #1}
\newcommand{\FF}[1]{\mathcal{F}_{#1}}
\newcommand{\FFfun}[1]{\mathcal{FF}_{#1}}
\newcommand{\F}[1]{\mathbf{F}_{#1}}
\newcommand{\Ack}[1]{\mathbf{A}_{#1}}
\newcommand{\ini}[1]{\overline{\vr #1}}
\newcommand{\fin}[1]{\underline{\vr #1}}
\newcommand{\midd}[1]{\dot{\vr #1}}
\newcommand{\middd}[1]{\vr #1{\cdot}{\cdot}}
\newcommand{\duch}[1]{\vr #1^*}
\newcommand{\iniduch}[1]{\overline{\duch #1}}
\newcommand{\finduch}[1]{\underline{\duch #1}}
\newcommand{\transf}[1]{#1^*\!}
\newcommand\overoverline[1]{\widehat{#1}}  %{\overline{\overline{#1}}}
\newcommand{\rel}[1]{R_{#1}}
\newcommand{\prettyexists}[2]{\exists #1 : #2}
\newcommand{\prettyforall}[2]{\forall #1 : #2}
\newcommand{\ratio}[1]{\text{\sc Ratio}_{}}
\newcommand{\zeroval}{{\bf 0}}
\newcommand{\val}{\vec v}
\newcommand{\linampl}[1]{\prog L_{#1}}
\newcommand{\mult}[1]{\prog M_{#1}}
\newcommand{\DTIME}[1]{\text{\sc DTime}(#1)}
\newcommand{\FDTIME}[1]{\text{\sc FDTime}(#1)}
\newcommand{\cval}[1]{\N^{{#1}}}%{\text{\sc Val}(#1)}

\newcommand{\prob}[3]{
\begin{quote}
{\sc #1}
\begin{description}
\item[Input] #2.
\item[Question] #3?
\end{description}
\end{quote}
}

\newcommand{\vas}{{\sc VAS}\xspace}
\newcommand{\vass}{{\sc VASS}\xspace}

%TODO mandatory: add short abstract of the document
\begin{abstract}
Petri nets, equivalently presentable as vector addition systems with states, 
are an established model of concurrency with widespread applications.
The reachability problem, where we ask whether from a given initial configuration 
there exists a sequence of valid execution steps reaching a given final configuration,
 is the central algorithmic problem for this model.
The complexity of the problem has remained, until recently, one of the hardest open questions 
in verification of concurrent systems.  
A first upper bound has been provided only in 2015 by Leroux and Schmitz, then refined by the same authors
to non-primitive recursive Ackermannian upper bound in 2019.
The exponential space lower bound, shown by Lipton already in 1976,
remained the only known for over 40 years until a breakthrough non-elementary lower bound
by  Czerwi{\'n}ski, Lasota, Lazic, Leroux and Mazowiecki in 2019.
Finally, a matching Ackermannian lower bound announced this year by Czerwi{\'n}ski and Orlikowski, and
independently by Leroux, established the complexity of the problem.

Our primary contribution is  an improvement of the former construction, making it conceptually simpler and more direct.
On the way we improve the lower bound for vector addition systems with states in fixed dimension (or, equivalently, 
Petri nets with fixed number of places):
while Czerwi{\'n}ski and Orlikowski prove $\FF k$-hardness (hardness for $k$th level in Grzegorczyk Hierarchy)
in dimension $6k$,
our simplified construction yields $\FF k$-hardness already in dimension $3k+2$.
%
%and Leroux in dimension $4k+5$,
\end{abstract}

\newpage

% !TEX root = ack-stacs.tex

\section{Introduction}

Petri nets~\cite{Petri62} are an established model of concurrency
with extensive and diverse applications in various fields, including
 modelling and analysis of hardware \cite{BurnsKY00,LerouxAG15}, software \cite{GermanS92,BouajjaniE13,KKW14} 
and database \cite{BojanczykDMSS11} systems, as well as chemical~\cite{AngeliLS11}, biological \cite{BaldanCMS10} and business \cite{Aalst15,LiDV17} processes (the references on applications are 
illustrative).  
The model admits various alternative but essentially equivalent presentations, most notably 
\emph{vector addition systems} (\vas) \cite{KarpM69}, and 
\emph{vector addition systems with states} (\vass) \cite[Sec.5]{Greibach78a}, \cite{HopcroftP79}.
The central algorithmic question for this model is the \emph{reachability problem} that 
asks whether from a given initial configuration there exists a sequence of valid execution steps reaching a given final configuration.
Each of the alternative presentations admits its own formulation of the reachability problem, all of them being equivalent  
due to straightforward polynomial-time translations that preserve reachability, see 
e.g.\ Schmitz's survey \cite[Section~2.1]{Schmitz16siglog}.
For instance, in terms of \vas, the problem is stated as follows:
given a finite set~$T$ of integer vectors in $d$-dimensional space and two $d$-dimensional vectors $\mathbf{v}$ and 
$\mathbf{w}$ of nonnegative integers, does there exist a walk from $\mathbf{v}$ to $\mathbf{w}$ such that 
it stays within the nonnegative orthant, and every step modifies the current position by adding some vector from~$T$?
The model of \vass is a natural extension of \vas with finite control, where $\mathbf{v}$ is additionally equipped with an initial control state,
$\mathbf{w}$ with a final one, and each vector in $T$ is additionally equipped with a source-target pair of control states.

We recall, following~\cite{Schmitz16siglog,stoc19,jacm,CLO21}, that importance of the Petri nets reachability 
problem is widespread, as many diverse 
problems from formal languages
\cite{Crespi-ReghizziM77}, logic \cite{Kanovich95,DemriFP16,DeckerHLT14,ColcombetM14}, 
concurrent systems~\cite{GantyM12,EsparzaGLM17}, 
process calculi~\cite{Meyer09}, linear algebra~\cite{HL18} and other areas
(the references are again illustrative) are known to admit reductions from the \vass reachability problem;
for more such problems and a wider discussion, we refer to~\cite{Schmitz16siglog}.

\para{Brief history of the problem}

The complexity of the Petri nets reachability problem has remained unsettled over the past half century.
%The late 1970s and the early 1980s saw the initial burst of activity.  
Concerning the decidability status,
after an incomplete proof by Sacerdote and Tenney in 1970s~\cite{SacerdoteT77}, decidability of the
problem was established by Mayr~\cite{Mayr81,Mayr84} in 1981, whose proof was
then simplified by Kosaraju~\cite{Kosaraju82}, and then further refined by Lambert in the 1990s~\cite{Lambert92}.
A different approach, based on inductive invariants, has emerged from
a series of papers by Leroux a decade ago \cite{Leroux10,Leroux11,Leroux12}.

Concerning upper complexity bounds, the first such bound has been shown only in 2015 by Leroux and Schmitz~\cite{LerouxS15}, 
consequently improved to the Ackermannian upper bound~\cite{LS}. 

Concerning lower complexity bounds,
Lipton's landmark exponential space lower bound from 70ies~\cite{lipton76} 
has remained the state of the art for over 40 years until 
a breakthrough non-elementary lower bound by Czerwi{\'n}ski, Lasota, Lazic, Leroux and Mazowiecki in 2019~\cite{stoc19}
(see also~\cite{jacm}): hardness of the reachability problem for the class \tower
of all decision problems that are solvable in time or space bounded by a tower of exponentials whose height
is an elementary function of input size.
A further refinement of \tower-hardness,
in terms of fine-grained complexity classes closed under polynomial-time reductions,
has been reported by Czerwi{\'n}ski, Lasota and Orlikowski~\cite{CLO21}.
Finally, a matching Ackermannian lower bound has been announced recently, independently by
Czerwi{\'n}ski and Orlikowski~\cite{CO}, and by Leroux~\cite{L}
(the two constructions underlying the proofs seem to be significantly different).
These results finally close the long standing complexity gap, and yield \ackermann-completness of the 
Petri nets reachability problem.
The techniques used in~\cite{CO} and~\cite{L} substantially differ.

\para{Our contribution}
We provide  an improvement of the construction of~\cite{CO}.
As our main contribution, we make the construction conceptually simpler and more direct
(the idea of improvement is discussed at the end of Section~\ref{sec:reach}, and the central ingredient
of our construction is presented in Section~\ref{sec:transf}).
Moreover, on the way we improve the parametric lower bound with respect to the dimension of 
vector addition systems with states 
(or, equivalently, the number of places of Petri nets%
\footnote{We remark that a Petri net corresponding to a \vass of dimension $d$ has $d+3$ places,
due to $3$ extra places for encoding the control states of \vass~\cite{HopcroftP79}.
Likewise, a \vas corresponding to a \vass of dimension $d$ has dimension $d+3$.}).
For formulating the result we refer to the complexity classes $\FF \alpha$ corresponding to 
the Grzegorczyk hierarchy of fast-growing functions~\cite{lob1970hierarchies,Schmitz16toct}, 
indexed by ordinals $\alpha = 0, 1, 2, \ldots, \omega$; for instance, the class $\FF 3$ is \tower 
(class of all decision problems that are solvable in time or space bounded by a tower of exponentials, 
closed under elementary reductions) 
and $\FF \omega$ is \ackermann
(class of all decision problems that are solvable in time or space bounded by the Ackermann function,
closed under primitive-recursive reductions).
Results of~\cite{CO,L} can be stated in parametric terms as follows:
the former shows $\FF k$-hardness of the reachability problem for \vass in dimension $6k$, 
while  the latter one shows $\FF k$-hardness for \vass in dimension $4k+5$.
Our simplified construction yields a better lower bound: $\FF k$-hardness already in dimension $3k+2$.
This improvement is a step towards establishing the tight dimension-parametric complexity of the problem,
as the best known upper bound is $\FF k$-membership in dimension $k-4$~\cite{LS}.
As a next step, an improvement of the construction of~\cite{L} to dimension $2k+4$ 
has been recently reported in~\cite{L-arxiv}.

\section{The reachability problem}
\label{sec:reach}

In this section we define the reachability problem and explain our contribution.
Following~\cite{stoc19,jacm,CLO21,CO,L},
we work with a convenient presentation of \vass as counter programs without zero tests,
where the dimension of a \vass corresponds to the number of counters of a program.

\para{Counter programs}
%\begin{definition}
A \emph{counter program} (or simply a \emph{program}) is a sequence of (line-numbered) commands, each of which is 
%either an atomic command 
of one of the following types:
\begin{quote}
\begin{tabular}{l@{\qquad}l}
\inc{\vr{x}}                       & (increment counter~$\vr{x}$) \\
\dec{\vr{x}}                       & (decrement counter~$\vr{x}$) \\
\goto{$L$}{$L'$}                   & (nondeterministically jump to either line~$L$ or line~$L'$) \\
\testz{\vr{x}}                     & (zero test: continue if counter~$\vr{x}$ equals $0$)
%\testm{\vr{x}}                     & (continue if counter~$\vr{x}$ equals $B$),
\end{tabular}
\end{quote}
%
%or a loop, where $\prog P$ is a counter program (we use indentation for the scope of a loop):
%%
%\begin{quote}
%\textbf{loop}\\
%\phantom{du} $\prog P$ 
%\end{quote}
%
Counters are only allowed to have nonnegative values.
We are particularly interested in counter programs \emph{without zero tests}, i.e., ones that use no zero test command.
Whenever we use zero tests in the sequel, it is always in view of faithfully simulating them by programs without zero tests.

\smallskip

\begin{quote}
\underline{Convention:}
In the sequel, unless specified explicitly, counter programs are implicitly assumed to be \underline{without zero tests}.
\end{quote}

%For readability we use the following syntactic sugar (indentation determines the scope)
%%
%\begin{quote}
%\textbf{loop}\\
%\phantom{du} $\prog P$ 
%\end{quote}
%%
%that defines a loop repeating the block of commands $\prog P$ some number of times chosen nondeterministically 
%(possibly zero, a priori possibly infinite), and is defined as follows:
%\begin{quote}
%\begin{algorithmic}[1]
%\State \goto {2} {4}
%\State $\prog P$
%\State \goto {1} 1
%\end{algorithmic}
%\end{quote}
%

\begin{example} \label{ex:1}
We write \add{\vr x}{m} (resp.~\sub{\vr x}{m}) as a shorthand for for $m$ consecutive increments (resp.~decrements) of $\vr x$.
As an illustration, consider the program with three counters $\vr C = \set{\vr x, \vr y, \vr z}$ (on the left),
and its more readable presentation using a syntactic sugar \textbf{loop} (on the right):

\PROGnoname{0.4}{ex}{
\State \goto 2 6
\State \dec{\vr{x}}  \label{l:testz.iter} \label{l:testz.iter2}
\State \inc{\vr{y}}                
\State \add{\vr{z}}{2}                               \label{l:testz.end}
\State \goto 1 1
\State \inc{\vr z}
}
\PROGnoname{0.4}{loop}{
\Loop
\State \dec{\vr{x}}  %\label{l:testz.iter} \label{l:testz.iter2}
\State \inc{\vr{y}}                
\State \add{\vr{z}}{2}                              % \label{l:testz.end}
\EndLoop
\State \inc{\vr z}
\vspace{0.4cm}
}
\medskip

\noindent
The program repeats the block of commands in lines \ref{l:testz.iter}--\ref{l:testz.end} 
some number of times chosen nondeterministically (possibly zero, although not infinite because $\vr{x}$ is decreasing,
and hence its initial value bounds the number of iterations) and then increments $\vr z$.
% halts provided counter~$\vr{x}$ is zero. 
In the sequel we conveniently use \textbf{loop} construct instead of explicit \textbf{goto} commands.
%Note that \textbf{loop} should be followed by some command,  and hence if it appears as the last in a program
%we assume it's followed by some dummy command.	
(A dummy command is implicitly added after a \textbf{loop} in case it appears at the very end of a program.)

We emphasise that counters are only permitted to have nonnegative values.  
In the program above, that is why the decrement in line~\ref{l:testz.iter2} works also as a non-zero test.
\end{example}

%\subsection{Runs and Computed Relations} \label{sec:runs}
% \para{Runs and Computed Relations}

Consider a program with counters $\vr C$.
By $\cval{\vr C}$
% = \N^{\vr C}$ 
we denote the set of all valuations of counters.
Given an initial valuation of counters,
a \emph{run} (or \emph{execution}) of a counter program is a finite 
sequence of executions of commands, as expected.
A run which has successfully finished % (i.e., executed its last line) 
we call \emph{complete};
otherwise, the run is \emph{partial}.  % or \emph{infinite}.  
Observe that, due to a decrement that would cause a counter to become negative, 
a partial run may fail to continue because it is blocked from further execution.  
Moreover, due to nondeterminism of \textbf{goto}, a program may have various runs
from the same initial valuation.
%, in each of the three categories: complete runs, maximal partial runs, and infinite runs.  

Two programs $\prog P, \prog Q$ may be \emph{composed} by concatenating them, written $\prog P\ \prog Q$.
We silently assume the appropriate re-numbering of lines referred to by \textbf{goto} command in $\prog Q$.

\para{The reachability problem}

Given a subset $R\subseteq \cval{\vr C}$ of valuations, by a run \emph{from $R$} we mean any run 
whose initial valuation belongs to $R$.
A complete run  is called \emph{$\vr X$-zeroing}, for a subset $\vr X \subseteq \vr C$ of counters, 
if it ends with $\vr x = 0$ for all $\vr x \in \vr X$.
When $\vr X = \set{\vr x}$ and/or $R = \set{r}$ are a singleton we write simply ``$\vr x$-zeroing'' and/or ``from $r$''.
%
%When $R=\set{r}$ contains a single valuation $r$,
%we speak of runs \emph{from $r$} instead of runs from $\set{r}$.
For instance, the program from Example~\ref{ex:1} has exactly one $\vr x$-zeroing run from the 
valuation $\vr x = 10$, $\vr y = \vr z = 0$, where the final values of counters are $\vr x = 0$, $\vr y = 10$, $\vr z = 21$.

By $\zeroval$ we denote the valuation where all counters are 0.
Following~\cite{stoc19,jacm,CLO21,CO,L}, we investigate the complexity of the following variant of the reachability problem
(with a partially specified final valuation of counters):

\prob{Reachability problem}
{A program $\prog P$ without zero tests, and a subset $\vr X$ of its counters}
{Does $\prog P$ have an $\vr X$-zeroing run from the zero valuation $\zeroval$}

%\begin{remark}
Since counter programs without zero tests can be seen as presentations of \vass, 
the above decision problem translates to a variant of the reachability problem for the latter model, 
where all components of the initial vector are 0, and 
the specified components of the final vector are required to be 0. 
This variant polynomially reduces to the classical one where all components of the final vector are 
fully specified (say, required to be 0),
and the reduction preserves dimension.
According to the encoding of \vass as Petri nets, the problem 
translates to the \emph{submarking reachability} problem for the latter model,
where all places (except for those encoding the control states) are initially empty,
and the specified places are required to be finally empty.
Finally, the submarking reachability problem is polynomially equivalent to a variant where the final content of all places
is fully specified.
%\end{remark}

\para{Fast-growing hierarchy} % and corresponding complexity classes}
For a positive integer $k$, let $\Npar k = \set{k, 2k, 3k, \ldots} \subseteq \N$ denote positive multiplicities of $k$.
We define the complexity classes $\FF i$ corresponding to the $i$th level in the Grzegorczyk 
Hierarchy~\cite[Sect.~2.3, 4.1]{Schmitz16toct}.
The standard family of approximations $\Ack i :\Npar 1 \to \Npar 1$ of Ackermann function,
for $i \in \Npar 1$, can be defined as follows:
 \[
\Ack 1(n) = 2n, \qquad \Ack {i+1}(n) = \underbrace{\Ack i \circ \Ack i \circ \ldots \circ \Ack i}_{n}(1) = 
\Ack i^n(1).
\]
%
%Finally, by diagonalisation we define our variant of Ackermann function:
%\[
%\F{}(n) = \F n(n).
%\]
In particular, $\Ack 2(n) = 2^n$ and $\Ack i(1) = 2$ for all $i\in\Npar 1$. 
Using functions $\Ack i$, we define  the complexity classes $\FF i$, indexed by $i\in \Npar 1$, of problems solvable in deterministic time
$\Ack i(p(n))$, where $p : \Npar 1\to\Npar 1$ ranges over functions computable in deterministic time $\Ack {i-1}^m(n)$, for
some $m\in\Npar 1$:
\[
\FF i = \bigcup_{p\in\FFfun {i-1}} \DTIME{\Ack i(p(n))}, \qquad \text{ where }
\FFfun i = \bigcup_{m\in\Npar 1} \FDTIME{\Ack {i}^m(n)}.
\]
Intuitively speaking, the class $\FF i$ contains all problems solvable in time $\Ack i(n)$, and is closed under reductions
computable in time of lower order $\Ack {i-1}^m(n)$, for some fixed $m\in\Npar 1$.
In particular, $\FF 3 = \tower$ (problems solvable in a tower of exponentials of time or space, 
whose height is an elementary function of input size).
The classes $\FF k$ are robust with respect to
the choice of fast-growing function hierarchy (see \cite[Sect.4.1]{Schmitz16toct}).
For $k\geq 3$,  instead of deterministic time, one could equivalently take nondeterministic time, or space.

\para{Dimension-parametric lower bound}
As the main result we prove $\FF k$-hardness for programs with the fixed number $3k+2$ of counters:
\begin{theorem}\label{thm:reach}
Let $k\geq 3$.
%The  reachabilty problem is \ackermann-hard.
The  reachabilty problem for programs with $3k+2$ counters is $\FF k$-hard.
\end{theorem}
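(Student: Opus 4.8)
The standard route to Ackermannian (and, parametrically, $\FF k$-) hardness for VASS reachability is to simulate computations of Turing machines (or Minsky machines with zero tests) that run within resources bounded by $\Ack k$, using a counter program without zero tests. The crux is to build, in dimension linear in $k$, a gadget that ``weakly computes'' the value $\Ack k(n)$ (or its inverse) on a dedicated counter, so that large counter values can be generated and consumed in a controlled way; zero tests of the simulated machine are then implemented by the usual trick of transferring a counter's content to an auxiliary counter whose final value is forced to be $0$ via the $\vr X$-zeroing condition, combined with a budget argument that makes cheating impossible. Concretely, I would proceed by induction on the level: design programs $\linampl k$ (amplifiers) that, starting from a counter holding $n$, produce on another counter a value related to $\Ack k(n)$, each level reusing the programs of the previous level and adding only a fixed number of fresh counters.

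**Key steps, in order.**
(1) First I would make precise the interface of a level-$k$ amplifier: a program that, given input value $n$ on an input counter and zeros elsewhere among its private counters, has a complete run ending with $\Ack k(n)$ on an output counter and zeros on its private counters, and moreover every complete run of the appropriate zeroing type computes \emph{at most} this value — the ``at most'' direction being what lets us use it as a resource bound.
(2) Then I would give the base case, a level-$3$ (or level-$2$) amplifier computing a tower of exponentials, using the classical Lipton-style iterated-squaring construction with a constant number of counters.
(3) Next, the inductive step: from a level-$k$ amplifier, build a level-$(k{+}1)$ amplifier by iterating the level-$k$ one $n$ times, i.e.\ implementing $\Ack{k+1}(n) = \Ack k^{\,n}(1)$; here the new ingredient is the control that feeds the output of one invocation back as the input of the next, exactly $n$ times, which needs a fresh loop counter and a careful zeroing discipline so that the residual counters are all $0$ at the end — this is where the $3k+2$ accounting comes from, adding $3$ counters per level on top of a constant base.
(4) With amplifiers in hand, I would simulate an $\Ack k$-bounded machine: allocate a counter holding $\Ack k(n)$ as the ``space/time budget'', simulate the transitions of the machine on a few more counters, implement each zero test by moving the tested counter into a sink counter and relying on a global invariant (total content of certain counters stays equal to the budget) so that a genuine zero test succeeds iff the counter was really zero; the final $\vr X$ will be the sink counters plus the amplifiers' private counters.
(5) Finally I would verify the dimension bookkeeping: the amplifier hierarchy contributes $3k + O(1)$ counters, the machine simulation a constant number, and with a careful sharing of counters between levels (the key optimisation over the $6k$ of~\cite{CO}, sketched at the end of Section~\ref{sec:reach} and realised via the transfer construction of Section~\ref{sec:transf}) the total is exactly $3k+2$ for $k\geq 3$. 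The reduction is computable in time $\Ack{k-1}^m(n)$ for a fixed $m$, hence is an $\FF k$-reduction, and since $\Ack k$-bounded machines already capture an $\FF k$-hard problem, $\FF k$-hardness of the reachability problem follows.

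**The main obstacle.** The hard part is the inductive step (3) together with the counter-sharing that yields the $3k+2$ bound rather than a larger multiple of $k$: one must iterate the level-$k$ amplifier while reusing as many of its private counters as possible across the $n$ iterations and across levels, and simultaneously prove the ``at most'' soundness direction — that no run can squeeze out a value exceeding $\Ack{k+1}(n)$ — despite the absence of zero tests. This soundness argument rests on a global linear invariant (a conserved sum) that must be maintained through every gadget, including the feedback loop and the transfer construction; getting the invariant to survive the reuse of counters, and checking that the only complete zeroing runs are the intended ones, is the delicate technical core. The remaining steps — the base case, the machine simulation, and the complexity/dimension accounting — are comparatively routine adaptations of known constructions.
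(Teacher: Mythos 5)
Your overall strategy coincides with the paper's: an inductively built hierarchy of amplifiers adding three counters per level, zero tests simulated by a budget/conserved-sum argument whose correctness is enforced only by the final zeroing condition, and a reduction from an $\F k$-bounded halting problem for two-counter machines. However, two points in your plan are genuine gaps rather than routine details. First, the amplifier interface you fix in step (1) --- a single output counter holding $\Ack k(n)$, with an ``at most'' soundness guarantee --- is not the invariant the induction can actually propagate. What must be carried from level to level is the \emph{ratio}: a triple of counters with $\vr b = F(B)$, $\vr c$ an \emph{arbitrary} positive value chosen nondeterministically, and $\vr d = \vr b \cdot \vr c$ (the set $\ratio{\geq}(F(B),\vr b,\vr c,\vr d,\vr C)$ of the paper). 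The unboundedness of $\vr c$ is essential, because the zero-test gadget at the next level needs a budget dominating the values of the simulated counters, which are themselves unbounded; and the product counter $\vr d$, forced to $0$ at the end, is precisely what upgrades your ``at most'' bounds to the exactness needed for hardness. With only a scalar output and a one-sided bound, your feedback loop in step (3) cannot certify that each of the $n$ iterations of the level-$k$ amplifier was performed exactly, so the soundness direction fails.

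Second, the zero-test mechanism you describe in step (4) --- counters kept below a budget $B$, each test implemented by flushing into a sink under a conserved-sum invariant --- is the ratio technique of the earlier works, and it is exactly what costs auxiliary counters and yields dimension $6k$ in \cite{CO}. The paper's improvement to $3k+2$ rests on inverting the roles of $\vr b$ and $\vr c$: one simulates only $B/2$ zero tests (each test consumes $2$ from $\vr b$ and exactly $2s$ from $\vr d$, where $s=\vr c+\vr x+\vr y$ is invariant and may be arbitrarily large), on counters bounded by $\vr c$ rather than by $B$, and this needs \emph{no} auxiliary counters beyond the three of the ratio; the counter $\vr b$, being bounded by a known constant, is finally absorbed into the control states. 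Your appeal to ``careful sharing of counters between levels'' is asserted but never realised, and with the mechanism as you describe it the accounting does not come out to $3k+2$. The base case, the reduction from the bounded halting problem, and the final folding of the two simulated counters into unused ones for $k\geq 3$ are indeed routine and match the paper.
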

The proof is in Section~\ref{sec:proofreach}.
The result can be compared to $\FF k$-hardness shown in \cite{CO} for $6k$ counters, and
in \cite{L} for $4k+5$ counters.
Like the cited results, Theorem~\ref{thm:reach} implies \ackermann-hardness for unrestricted number of counters
which, together with \ackermann upper bound of \cite{LS}, yields \ackermann-completness
of the reachability problem.

\para{Idea of simplification}

Czerwi{\'n}ski and Orlikowski~\cite{CO} use the \emph{ratio technique} introduced previously in~\cite{stoc19}.
Speaking slightly informally, suppose some three counters $\vr b, \vr c, \vr d$ satisfy initially
\begin{align} \label{eq:ratio}
\vr b = B, \qquad  \vr c > 0, \qquad \vr d = \vr b \cdot \vr c, % \qquad 
%\prettyforall{\vr x\in \vr C \setminus \set{\vr b, \vr c, \vr d}}{\vr x = 0}.
\end{align}
for some fixed positive integer $B\in\N$.
Furthermore, suppose that the initial values of $\vr c$ and $\vr d$ may be 
arbitrary, in a nondeterministic way, as long as they satisfy the latter equality in~\eqref{eq:ratio}; 
they are hence unbounded.
Under these assumptions, the ratio technique of~\cite{stoc19} allows one to correctly simulate unboundedly many zero tests 
(in fact, the number of simulated zero tests corresponds to the initial value of $\vr c$ which may be arbitrarily large)
on counters bounded by $B$, at the price of using some auxiliary counters.

As our technical contribution, we improve and simplify the ratio technique.
The core idea underlying our  simplification is, intuitively speaking, 
to swap the roles of counters $\vr b$ and $\vr c$:
we observe that the above-defined assumption~\eqref{eq:ratio} 
allows us to correctly simulate exactly $B/2$ zero tests (for $B$ even) on unbounded counters
(in fact, on counters bounded by the initial value of $\vr c$ which may be arbitrarily large), without any auxiliary counters.
This novel approach is presented in detail in Section~\ref{sec:transf}.

\section{Multipliers} \label{sec:mult}

Following the lines of~\cite{stoc19,jacm,CO}, we rely on a concept of \emph{multiplier}.

\vspace{-3mm}

\para{Sets computed by programs}
Consider a program $\prog P$ with counters $\vr C$, % (with zero tests or without), 
a set of counters $\vr X \subseteq \vr C$ and $R\subseteq \cval{\vr C}$.
We define the set \emph{$\vr X$-computed by $\prog P$  from $R$} 
as the set of all valuations of counters at the end of all $\vr X$-zeroing (and hence forcedly complete)
runs of $\prog P$ from $R$.
Formally, denoting by $\runs {\prog P} R {\vr X}$ the set of all $\vr X$-zeroing runs of $\prog P$ from $R$,
and by $\finval \pi$ the final counter valuation of a complete run $\pi$ of $\prog P$, 
the set $\vr X$-computed by $\prog P$ from $R$ is
\[
\computed {\prog P} R {\vr X} \ = \  \setof{\finval \pi}{\pi \in \runs {\prog P} R {\vr X}}.
\]
%
%We omit the prefix $\vr X$- whenever $\vr X$ is irrelevant.
%When $\vr X=\set{\vr x}$ contains a single counter, we speak of the set $\vr x$-computed from $R$.
% instead of the set computed from $\set{\vr x}$.
%
We omit $\vr X$ when it is irrelevant.
As before, when $\vr X = \set{\vr x}$ and/or $R = \set{r}$ are a singleton we write simply '$\vr x$-computed' and/or 'from $r$'.

\begin{example}
The program in Example~\ref{ex:1} above, $\vr x$-computes from the set of all valuations satisfying $\vr y = \vr z = 0$
(no constraint for $\vr x$),
the set of all valuations satisfying $\vr x = 0$ (trivially) and $\vr z = 2\vr y+1$.
\end{example}

Likewise, for a fixed integer $m\in\N$ and a program $\prog P$ with zero tests, 
we define the set $\vr X$-computed by $\prog P$ from $R$ 
\emph{using $m$ zero tests}, by restricting the above definition to runs $\pi\in\runs {\prog P} R {\vr X}$ that do exactly $m$ zero tests. 
This finer variant of the definition will be used in the next section.

\para{Multipliers}
%\begin{definition}
Let $\vr b, \vr c, \vr d \in \vr C$ be some  three distinguished counters, and $B\in\Ne$.
%For $\succeq\in\set{=, >, \geq}$, w
We define the subset $\ratio{\succeq}(B, \vr b, \vr c, \vr d, \vr C)\subseteq \cval {\vr C}$,
called informally the \emph{ratio of $B$}, consisting of all valuations that satisfy the three conditions~\eqref{eq:ratio}
and assign 0 to all other counters $\vr x\in\vr C \setminus \set{\vr b, \vr c, \vr d}$.
%\end{definition}
%
%\begin{align*}
%\ratio{\succeq}(B, \vr b, \vr c, \vr d, \vr C): & \qquad \vr b = B, \qquad  \vr c > 0, \qquad \vr d = %\succeq 
%B \cdot \vr c, \qquad 
%\prettyforall{\vr x\in \vr C \setminus \set{\vr b, \vr c, \vr d}}{\vr x = 0}.
%\end{align*}
%
\begin{definition} \label{def:mult}
A program $\prog M$ (with no zero tests) 
with counters $\vr C$ that $\vr z$-computes from the zero valuation $\zeroval$  the set $\ratio{\geq}(B, \vr b, \vr c, \vr d, \vr C)$,
for some four of its counters $\vr z, \vr b, \vr c, \vr d\in\vr C$,
we call {\bf $B$-multiplier}.
In formula: $\computed {\prog M} {\zeroval} {\vr z} = \ratio{\geq}(B, \vr b, \vr c, \vr d, \vr C)$.
\end{definition}

\begin{example} \label{ex:mult}
As a simple example, for every fixed $B\in\Ne$, the following program is a $B$-multiplier of size $\O(B)$
(several commands are written in one line to save space).
Counter $\vr z$ is not used at all.

\PROG{0.55}{Program $\mult B(\vr b, \vr c, \vr d)$}{prog:mult}{
\State \add{\vr b} B
  \quad \add{\vr d} B \quad \inc{\vr c} 
\Loop
  \State \add{\vr d} B \quad \inc{\vr c} 
\EndLoop
%\Loop
%  \State \inc {\vr d}
%\EndLoop
}
\end{example}

\medskip
%
%\noindent
%It uses the  syntactic sugar shorthand \LoopAtLeastOnceName that  enforces at least one iteration of the loop:
%
%\PROGnoname{0.5}{}{
%\State $\prog P$
%\Loop
%\State $\prog P$
%\EndLoop
%}

%\bigskip

%\begin{theorem} \label{thm:multstrict}
%%Given $n\in\Ne$ one can compute, in time exponential in $n$, an $\F {}(n)$-multiplier.
%Given $k\in\Npar 1$ and $n\in\Ne$ one can compute, in time linear in $k$ and $n$, a strict $\F k(n)$-multiplier
%with $6k$ counters.
%\end{theorem}

\noindent
Directly from the definition we derive the following fundamental property of multipliers, to be used in the proofs  
in Sections~\ref{sec:proofmult} and \ref{sec:proofreach}:
\begin{claim} \label{claim:mult}
Let $\prog M$ be a $B$-multiplier with counters $\vr C$ as in Definition~\ref{def:mult}, 
let $\prog P$ be a counter program 
with counters $\vr C \setminus\set{\vr z}$,
%possibly using counters $\vr b, \vr c, \vr d$ but not using $\vr z$,
and let $\vr Y\subseteq \vr C$.
Then the set $\vr Y$-computed by $\prog P$ from $\ratio{\geq}(B, \vr b, \vr c, \vr d, \vr C)$ is equal to
the set $(\set{\vr z} \cup \vr Y)$-computed by the composed program $\prog M \ \prog P$ from $\zeroval$:
\[
\computed {\prog P} {\ratio{\geq}(B, \vr b, \vr c, \vr d, \vr C)} {\vr Y} \ = \ 
\computed {\prog M \, \prog P} {\zeroval} {\set{\vr z} \cup \vr Y}.
\]
%
%\begin{itemize}
%\item the set $\vr Y$-computed by $\prog P$ from $\ratio{\geq}(B, \vr b, \vr c, \vr d, \vr C)$;
%\item the set $(\set{\vr z} \cup \vr Y)$-computed by the composed program $\prog M \ \prog P$ from $\zeroval$.
%\end{itemize}
\end{claim}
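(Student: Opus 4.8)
The plan is to prove the claimed set equality by two inclusions, both following directly from the definition of $\vr X$-computed sets together with the defining property of a $B$-multiplier, namely $\computed {\prog M} {\zeroval} {\vr z} = \ratio{\geq}(B, \vr b, \vr c, \vr d, \vr C)$. The key observation is that every $\vr z$-zeroing run of $\prog M$ from $\zeroval$ ends in some valuation $r \in \ratio{\geq}(B, \vr b, \vr c, \vr d, \vr C)$, and conversely every such $r$ arises this way; moreover $\prog M$ does not use counter $\vr z$ after producing the ratio (it produces it as the final valuation), while $\prog P$ is a program over $\vr C \setminus \set{\vr z}$, so the composition behaves transparently with respect to $\vr z$.

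First I would unfold both sides. The left-hand side is $\setof{\finval \pi}{\pi \in \runs {\prog P} {\ratio{\geq}(B, \vr b, \vr c, \vr d, \vr C)} {\vr Y}}$, i.e. final valuations of $\vr Y$-zeroing runs of $\prog P$ whose initial valuation lies in the ratio set. The right-hand side is $\setof{\finval \rho}{\rho \in \runs {\prog M \, \prog P} {\zeroval} {\set{\vr z}\cup \vr Y}}$. I would then argue that runs $\rho$ of $\prog M \, \prog P$ from $\zeroval$ decompose uniquely as $\rho = \sigma \tau$ where $\sigma$ is a complete run of $\prog M$ from $\zeroval$ and $\tau$ is a run of $\prog P$ starting from $\finval \sigma$; this is just the semantics of program composition. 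For the left-to-right inclusion, given a $\vr Y$-zeroing run $\pi$ of $\prog P$ from some $r$ in the ratio set, pick (using the multiplier property) a $\vr z$-zeroing run $\sigma$ of $\prog M$ from $\zeroval$ with $\finval \sigma = r$; then $\sigma \pi$ is a run of $\prog M \, \prog P$ from $\zeroval$, it is complete, and since $\sigma$ ends with $\vr z = 0$ and $\prog P$ never touches $\vr z$, the run $\sigma\pi$ is $(\set{\vr z}\cup\vr Y)$-zeroing with $\finval{\sigma\pi} = \finval\pi$. For the converse, given a $(\set{\vr z}\cup\vr Y)$-zeroing run $\rho = \sigma\tau$ of $\prog M \, \prog P$ from $\zeroval$, the prefix $\sigma$ is a complete run of $\prog M$ from $\zeroval$; it must be $\vr z$-zeroing because $\prog P$ does not modify $\vr z$ and $\rho$ ends with $\vr z = 0$, hence $\finval\sigma \in \ratio{\geq}(B, \vr b, \vr c, \vr d, \vr C)$, and $\tau$ is then a $\vr Y$-zeroing run of $\prog P$ from that ratio valuation with $\finval\tau = \finval\rho$.

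The one point deserving care — and the main (mild) obstacle — is justifying that $\sigma$ is forced to be $\vr z$-zeroing: this uses precisely that $\prog P$ has counter set $\vr C\setminus\set{\vr z}$, so no command of $\prog P$ reads or writes $\vr z$, whence the value of $\vr z$ at the end of $\rho$ equals its value at the end of $\sigma$; since $\rho$ is $\set{\vr z}\cup\vr Y$-zeroing this value is $0$. One should also note that $\computed {\prog M} {\zeroval}{\vr z}$ being defined via $\vr z$-zeroing runs already packages the existence direction. I would keep the write-up to a short paragraph invoking the compositional semantics of $\prog M\,\prog P$ and the disjointness of $\prog P$'s counters from $\set{\vr z}$, stating the two inclusions as above; no calculation is needed beyond this bookkeeping.

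\begin{proof}
We prove the two inclusions separately. Throughout we use that a run $\rho$ of the composed program $\prog M \, \prog P$ from $\zeroval$ factors uniquely as $\rho = \sigma\tau$, where $\sigma$ is a complete run of $\prog M$ from $\zeroval$ and $\tau$ is a run of $\prog P$ from $\finval\sigma$; moreover $\rho$ is complete iff $\tau$ is, and $\finval\rho = \finval\tau$. Since $\prog P$ has counters $\vr C\setminus\set{\vr z}$, no command of $\prog P$ modifies $\vr z$, so the value of $\vr z$ along $\tau$ is constant and equal to its value at $\finval\sigma$.

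$(\subseteq)$ Let $\val \in \computed {\prog P} {\ratio{\geq}(B, \vr b, \vr c, \vr d, \vr C)} {\vr Y}$, witnessed by a $\vr Y$-zeroing run $\pi$ of $\prog P$ from some $r\in\ratio{\geq}(B, \vr b, \vr c, \vr d, \vr C)$ with $\finval\pi = \val$. By Definition~\ref{def:mult}, $r\in\computed {\prog M} {\zeroval} {\vr z}$, so there is a $\vr z$-zeroing run $\sigma$ of $\prog M$ from $\zeroval$ with $\finval\sigma = r$. Then $\sigma\pi$ is a complete run of $\prog M \, \prog P$ from $\zeroval$; it ends with $\vr z = 0$ (already at $\finval\sigma$, and $\pi$ does not change $\vr z$) and with $\vr y = 0$ for all $\vr y\in\vr Y$ (since $\pi$ is $\vr Y$-zeroing). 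Hence $\sigma\pi$ is $(\set{\vr z}\cup\vr Y)$-zeroing and $\finval{\sigma\pi} = \finval\pi = \val$, so $\val\in\computed {\prog M \, \prog P} {\zeroval} {\set{\vr z}\cup\vr Y}$.

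$(\supseteq)$ Let $\val \in \computed {\prog M \, \prog P} {\zeroval} {\set{\vr z}\cup\vr Y}$, witnessed by a $(\set{\vr z}\cup\vr Y)$-zeroing run $\rho$ of $\prog M \, \prog P$ from $\zeroval$ with $\finval\rho = \val$. Factor $\rho = \sigma\tau$ as above. Since $\rho$ ends with $\vr z = 0$ and $\tau$ does not modify $\vr z$, the run $\sigma$ ends with $\vr z = 0$, i.e.\ $\sigma$ is a $\vr z$-zeroing run of $\prog M$ from $\zeroval$; hence $r := \finval\sigma \in \computed {\prog M} {\zeroval} {\vr z} = \ratio{\geq}(B, \vr b, \vr c, \vr d, \vr C)$. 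Moreover $\tau$ is a complete run of $\prog P$ from $r$, ending with $\vr y = 0$ for all $\vr y\in\vr Y$ (as $\rho$ is $\vr Y$-zeroing and $\tau$ is its suffix), so $\tau$ is a $\vr Y$-zeroing run of $\prog P$ from $r\in\ratio{\geq}(B, \vr b, \vr c, \vr d, \vr C)$ with $\finval\tau = \finval\rho = \val$. Therefore $\val\in\computed {\prog P} {\ratio{\geq}(B, \vr b, \vr c, \vr d, \vr C)} {\vr Y}$.
\end{proof}
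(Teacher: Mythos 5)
Your proof is correct and follows essentially the same route as the paper: the paper phrases it as a general composition rule (if $\prog P$ $\vr X$-computes $B$ from $A$ and $\prog Q$ does not use the counters $\vr X$, then $(\vr X\cup\vr Y)$-zeroing runs of the composition correspond bijectively to $\vr Y$-zeroing runs of $\prog Q$ from $B$), and your two inclusions simply spell out that correspondence explicitly, including the key point that $\prog P$ not touching $\vr z$ forces the $\prog M$-prefix to be $\vr z$-zeroing.
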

\begin{proof}
The claim is a special case of the following general composition rule:
for two programs $\prog P$ and $\prog Q$,
if $\computed {\prog P} A {\vr X} = B$
%$\computed {\prog Q} B {\vr Y} = C$ 
and $\prog Q$ does not use counters $\vr X$,
then $\computed {\prog P\, \prog Q} A {\vr X \cup \vr Y} = \computed {\prog Q} B {\vr Y}$.
%if a program $\prog P$ $\vr X$-computes a set $B$ from a set $A$, a program $\prog Q$ $\vr Y$-computes a set $C$ from $B$, 
%then the composed program $\prog P \ \prog Q$ forcedly $(\vr X \cup \vr Y)$-computes $C$ from $A$.
Indeed, under the above assumptions 
$(\vr X \cup \vr Y)$-zeroing runs of $\prog P \ \prog Q$ from $A$ are in mutual correspondence with
$\vr Y$-zeroing runs of $\prog Q$ from $B$.
%, since $\prog P$ $\vr X$-computes $B$ from $A$ and $\prog Q$ does not use counters $\vr X$.
\end{proof}

\para{Computing multipliers}

For technical convenience we prefer to rely on the following family of functions $\F i : \Ne\to\Ne$, indexed by $i\in \Npar 1$,
closely related to functions $\Ack i$ (cf.~Claim~\ref{claim:ack} below):
\begin{align} \label{eq:F1}
\F 1(n) = 2n, \qquad \F {i+1} = \successor{{\F i}} \quad \text{where } \quad
\successor{F}(n) = \underbrace{F \circ F  \circ \ldots \circ F}_{n / 4}(4).
\end{align}
%
%In particular, $\F i(4) = 8$ for all $i\in\Npar 1$.
By induction on $i$ one easily shows that $\F i$ is a linear re-scaling of $\Ack i$:
%The two families are equivalent for our purposes
%
\begin{claim} \label{claim:ack}
$\F i(4\cdot n) = 4 \cdot \Ack i(n)$, for $i, n\in \Npar 1$.
\end{claim}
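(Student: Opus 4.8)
The plan is to prove Claim~\ref{claim:ack} by induction on $i$, showing $\F i(4n) = 4\cdot\Ack i(n)$ for all $n\in\Npar 1$.

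\textbf{Base case.} For $i=1$ we have $\F 1(4n) = 2\cdot(4n) = 4\cdot(2n) = 4\cdot\Ack 1(n)$, directly from the definitions $\F 1(n)=2n$ and $\Ack 1(n)=2n$.

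\textbf{Inductive step.} Assume $\F i(4m) = 4\cdot\Ack i(m)$ for all $m\in\Npar 1$; I want to show $\F {i+1}(4n) = 4\cdot\Ack {i+1}(n)$. Unfolding the definition $\F{i+1} = \successor{\F i}$ with argument $4n$, we get $\F{i+1}(4n) = \successor{\F i}(4n) = \F i^{\,(4n)/4}(4) = \F i^{\,n}(4)$, i.e.\ the $n$-fold composition of $\F i$ applied to $4$. The key observation is that $4$ is always of the form $4\cdot 1$, so the inductive hypothesis lets us "conjugate" $\F i$ into $4\cdot\Ack i(\cdot)$. Concretely, I prove by a secondary induction on $j$ (for $0\le j\le n$) that $\F i^{\,j}(4) = 4\cdot\Ack i^{\,j}(1)$: for $j=0$ both sides are $4$; for the step, $\F i^{\,j+1}(4) = \F i\big(\F i^{\,j}(4)\big) = \F i\big(4\cdot\Ack i^{\,j}(1)\big) = 4\cdot\Ack i\big(\Ack i^{\,j}(1)\big) = 4\cdot\Ack i^{\,j+1}(1)$, using the primary inductive hypothesis with $m = \Ack i^{\,j}(1)$ (which lies in $\Npar 1$, as $\Ack i$ maps $\Npar 1$ to $\Npar 1$). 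Taking $j=n$ gives $\F i^{\,n}(4) = 4\cdot\Ack i^{\,n}(1)$. Finally, by the definition $\Ack {i+1}(n) = \Ack i^{\,n}(1)$, we obtain $\F{i+1}(4n) = 4\cdot\Ack i^{\,n}(1) = 4\cdot\Ack{i+1}(n)$, completing the induction.

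\textbf{Main obstacle.} There is no real obstacle here; the only thing to be slightly careful about is the bookkeeping of arguments — the definition $\successor F(n) = F^{n/4}(4)$ only makes sense (with $n/4$ an integer) when the argument is a multiple of $4$, which is exactly why all arguments in play have the shape $4\cdot(\text{something})$ and why the base value is $4 = 4\cdot 1$ rather than, say, $1$. One should also note in passing that $\Ack i(1) = 2 \in \Npar 1$ and that each $\Ack i$ preserves $\Npar 1$, so every intermediate value $\Ack i^{\,j}(1)$ to which the primary hypothesis is applied genuinely lies in the domain $\Npar 1$; this is immediate from the definitions of $\Ack 1$ and $\Ack{i+1}$.
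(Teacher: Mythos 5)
Your proof is correct and follows the same route as the paper: induction on $i$, with the inductive step unfolding $\F{i+1}(4n) = \F i^{\,n}(4)$ and applying the hypothesis $n$ times to convert it into $4\cdot\Ack i^{\,n}(1) = 4\cdot\Ack{i+1}(n)$. Your secondary induction on $j$ merely makes explicit the ``$n$-fold application'' that the paper states in one line.
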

\begin{proof}
As $\F 1(n) = 2n$ and $\Ack 1(n) = 2n$, the claim holds for $i = 1$.
Assuming the claim for $i\in\Npar 1$, by $n$-fold application thereof we derive
the required equality for $i+1$:
\[
\F {i+1}(4\cdot n) \ = \  
%\underbrace{\F  i \circ  \ldots \circ \F i}_{n-1}(8) \ = \ 
\underbrace{\F  i \circ  \ldots \circ \F i}_{n}(4) \ = \ 
4 \cdot \underbrace{\Ack i \circ \ldots \circ \Ack i}_{n}(1) \ = \ 
4 \cdot \Ack {i+1}(n).
\]
%as required.
\end{proof}

As a technical core of the proof of Theorem~\ref{thm:reach}, combining our simplification with the lines of \cite{CO},
we provide an effective construction of $B$-multipliers with $3k+2$ counters, where $B = \F k(n)$, of size polynomial in
$k$ and $n$.

\begin{theorem} \label{thm:multrelaxed}
Given $k\in\Npar 1$ and $n\in\Ne$ one can compute, in time polynomial in $k$ and $n$,
%$(\F {k-1}(n))^k$, 
an $\F k(n)$-multiplier with $3k+2$ counters.
\end{theorem}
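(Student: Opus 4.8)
The plan is to construct the $\F k(n)$-multiplier by induction on $k$, following the recursion that defines $\F {k+1}=\successor{\F k}$. The base case $k=1$ is handled directly by Example~\ref{ex:mult}: since $\F 1(n)=2n$, the program $\mult {2n}(\vr b,\vr c,\vr d)$ is a $\F 1(n)$-multiplier, it uses only $3$ counters (well within the budget $3\cdot 1+2=5$), and it has size $\O(n)$, hence polynomial in $k$ and $n$. For the inductive step, assuming we can build an $\F k(m)$-multiplier $\prog M_k$ with $3k+2$ counters for every $m$, we must build an $\F {k+1}(n)$-multiplier with $3(k+1)+2=3k+5$ counters, i.e.\ we are allowed exactly $3$ fresh counters. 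The target constant is $\F {k+1}(n)=\successor{\F k}(n)=\F k^{\,n/4}(4)$, so the new multiplier has to iterate the ``raise the constant by one application of $\F k$'' operation $n/4$ times, starting from the value $4$.

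The central mechanism is the following: to turn a $B$-multiplier into a $\F k(B)$-multiplier we need, given the ratio of $B$ sitting on counters $\vr b,\vr c,\vr d$ (so $\vr b=B$, $\vr d=\vr b\cdot\vr c$, $\vr c$ arbitrarily large), to reconfigure the counters so that they hold the ratio of $\F k(B)$. The key asset is Claim~\ref{claim:mult}: running $\prog M_k$ from $\zeroval$ followed by any program $\prog P$ on the remaining counters is the same as running $\prog P$ from the ratio of $\F k(m)$ — so by plugging $\prog M_k$ in front we get access, through internal counters, to a source of zero tests on counters bounded by $\F k(m)$, or dually (this is the novel ``swap'' of Section~\ref{sec:transf}) to exactly $B/2$ zero tests on unbounded counters when we start from the ratio of $B$. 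The construction of the $\F {k+1}(n)$-multiplier will therefore be: first lay down $\mult {?}$ or the recursively obtained $\prog M_k$ to produce an initial ratio, then run a controller program that (a) uses the $B/2$ available zero tests to simulate a small bounded counter machine that repeatedly applies the amplification, and (b) in each of the $n/4$ stages invokes the amplification transformation of Section~\ref{sec:transf} to pass from the ratio of the current value $v$ to the ratio of $\F k(v)$, reusing the same three fresh counters each time. Because each stage only needs the fixed three extra counters on top of the $3k+2$ used by the (reused) inner multiplier machinery, the dimension stays at $3k+5$; and because the controller is a fixed-size program plus an $\O(n)$ unrolling (or an $\O(\log)$-size loop with a counter), the whole object has size polynomial in $k$ and $n$.

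The main obstacle — and the place where the real work lies — is making the amplification step precise: showing that there is a fixed program (the transformation of Section~\ref{sec:transf}) which, when composed appropriately with $\prog M_k$ and a zero-test-eliminating harness, $\vr z$-computes the ratio of $\F k(v)$ from the ratio of $v$, with all bookkeeping counters correctly zeroed at the end so that the composition rule in the proof of Claim~\ref{claim:mult} applies cleanly at each stage. Two things must be checked carefully here. First, the zero tests used inside the amplification really are bounded (or, in the swapped regime, really are available in sufficient number $v/2 \ge$ the number of steps the stage needs), so that simulating them via the ratio technique is faithful and introduces no spurious runs — this is exactly the soundness-and-completeness argument that the computed \emph{set} is neither too small nor too large. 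Second, the freshly allocated three counters can be recycled between the $n/4$ consecutive stages: after stage $j$ they must be back to holding a clean ratio of the new value with everything else zero, which lets the next stage start from a state indistinguishable from $\zeroval$ plus a ratio. I would therefore structure the proof as: (1) base case via Example~\ref{ex:mult}; (2) statement and proof of the single-stage amplification lemma using Section~\ref{sec:transf} and Claim~\ref{claim:mult}; (3) assembling $n/4$ stages with counter reuse, tracking that the counter count is $3k+5$ and the size is polynomial; (4) concluding $\computed{\prog M_{k+1}}{\zeroval}{\vr z}=\ratio{\geq}(\F{k+1}(n),\ldots)$ by chaining the per-stage equalities. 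The per-stage correctness in step (2) is where essentially all the difficulty concentrates; steps (1), (3), (4) are bookkeeping.
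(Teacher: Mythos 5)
There is a genuine gap, and it lies exactly where you say ``essentially all the difficulty concentrates'': your induction is carried on the wrong object. You induct on \emph{multipliers} --- programs that produce the ratio of a \emph{fixed, construction-time} constant $\F k(m)$ from $\zeroval$ --- but the per-stage step you need is a program that maps the ratio of the \emph{current runtime value} $v$ to the ratio of $\F k(v)$. In the iteration $\F{k+1}(n)=\F k^{\,n/4}(4)$ the intermediate values $v_1=\F k(4),\ v_2=\F k(v_1),\ldots$ are astronomically large counter values determined only at runtime, so the stage program must work \emph{uniformly in $v$}; a multiplier for any fixed $m$ gives you no such thing, and Claim~\ref{claim:mult} only lets you pre-compose a multiplier in front of a program, i.e.\ it supplies a fixed initial ratio, not a $v\mapsto \F k(v)$ transformer. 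Nor is the transformation of Section~\ref{sec:transf} an ``amplification transformation'': it is purely a zero-test--elimination device. The paper resolves this by making the uniform object --- the \emph{$F$-amplifier}, with separate input and output counter triples --- the thing constructed by induction (Lemma~\ref{lem:ampl-pres}): the $\successor{F}$-amplifier is obtained by looping the $F$-amplifier (shuttling output back to input via an identity-amplifier) under $2\ell+1$ zero tests, and then eliminating those zero tests with the three fresh counters of Section~\ref{sec:transf}; only at the very end is the $\F k$-amplifier converted into an $\F k(n)$-multiplier by pre-composing with the concrete $\mult n$, plus one extra trick (encoding the bounded counter $\vr b$ into control states) to reach $3k+2$ rather than $3k+3$ counters.

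A second, related problem is your plan to ``recycle'' the three fresh counters between the $n/4$ stages so that after each stage they again hold ``a clean ratio of the new value''. The correctness of the Section~\ref{sec:transf} construction (Lemma~\ref{lem:halt}) hinges on $\vr b$ and $\vr d$ being driven to zero at the very end of the run, consuming the entire zero-test budget; you cannot restore them mid-run to a fresh ratio of $2(m'+1)$ for the next stage without invoking yet another (runtime-value) multiplier, which reintroduces the uniformity problem. In the paper the three fresh counters at level $k+1$ are charged once, up front, with the whole budget of $2\ell+1$ zero tests for the entire loop over the level-$k$ amplifier, and the iteration count is governed by that budget (hence by the input value $B=4(\ell+1)$ read off the counters at runtime), not by an $n/4$-fold unrolling. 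Your step (2) is therefore not a lemma to be ``checked carefully'' but the missing inductive construction itself.
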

The proof is in Section~\ref{sec:proofmult}.

%%Theorem~\ref{thm:Ack} follows.
%
%\begin{theorem} \label{thm:multF}
%%Given $n\in\Ne$ one can compute, in time exponential in $n$, an $\F {}(n)$-multiplier.
%Given $k\in\Npar 1$ and $n\in\Ne$ one can compute, in time $(\F {k-1}(n))^k$, an $\F k(n)$-multiplier
%with $2k+4$ counters.
%\end{theorem}
%
%\begin{theorem}\label{thm:ackF}
%Let $k\geq 3$.
%%The  reachabilty problem is \ackermann-hard.
%The  reachabilty problem for programs with $2k+4$ counters is $\FF k$-hard.
%\end{theorem}
%

\section{Bounded number of zero tests} \label{sec:transf} 

In this section we provide a novel construction that enables simulating a bounded number $m$ of 
zero tests (cf.~Lemma~\ref{lem:halt}) at the cost of introducing additional 3 counters initialised to
the ratio of $B = 2(m+1)$.
This construction is a core ingredient of the proofs of Theorems~\ref{thm:reach} and~\ref{thm:multrelaxed}.

Whenever analysing a single run of a program, we denote by $\ini x$ the initial value of a counter $\vr x$,
and by $\fin x$ the final value thereof.

\para{Maximal iteration}  

In the sequel we intensively use loops of the following form that, intuitively, flush the value from counter 
$\vr f$ to $\vr e$, 
%and then back from $\vr e$ to $\vr f$, in both phases 
decreasing simultaneously counter $\vr d$ (and possibly execute some further commands):
%(several command are written in none line to save space):

\begin{align}
\PROGnoname{0.5}{prog:flush}{
\Loop
  \State \dec{\vr f} \quad \inc{\vr e} \quad \dec{\vr d} \quad \ldots
\EndLoop
%\Loop
%  \State \inc{\vr f} \quad \dec{\vr e} \quad \dec{\vr d}
%\EndLoop
}
\end{align}

\bigskip

\noindent
Assuming $\ini d \geq \ini f$,
we observe that the amount $\ini d - \fin d$ by which $\vr d$ is decreased as an effect of execution 
(we use the word \emph{execution} as a synonym to \emph{complete run}) of the above loop
may be any value between $0$ and $\ini f$.
Furthermore, assuming $\ini d \geq \ini e + \ini f$,
the equality
$\ini d - \fin d = \ini e + \ini f$ holds if and only if 
\begin{align} \label{eq:max}
\ini e = 0 = \fin f.
\end{align}
This simple observation will play a crucial role in the sequel, and deserves a definition:
\begin{definition}
Whenever an execution of a loop of the form~\eqref{prog:flush} satisfies the two equalities~\eqref{eq:max} we call this execution {\bf maximally iterated}.
\end{definition}
% at the exit from the first loop (we denote this value by $\midd f$), and  $\vr e = 0$ at the exit from the last one.
%We state the simple but crucial property of a run of the macro (needed later):
%
%\begin{claim} \label{claim:flush}
%Assuming $\ini d \geq 2(\ini f + \ini e)$,
%we have $0\leq \ini d - \fin d \leq 2(\ini f + \ini e)$.
%Furthermore,
%%$\ini d - \fin d = 2 \ini f + \ini e$ if, and only if the loops are iterated maximally. Finally, 
%the equality $\ini d - \fin d = 2(\ini f+\ini e)$ holds if, and only if $\ini e = \fin e = \midd f = 0$ and $\ini f = \fin f$.
%\end{claim}

\para{The construction}

Let $\prog P$ be a counter program with counters $\vr C$, and assume that $\prog P$ uses zero tests only
on two its counters $\vr x, \vr y\in\vr C$ (the construction 
easily extends to programs with an arbitrary number of zero-tested counters). 
We add to $\prog P$ three fresh counters $\vr b, \vr c, \vr d$ %and two auxiliary counters $\duch c, \duch t$
(let $\vr {\transf C} = \vr C \cup \set{\vr b, \vr c, \vr d}$), 
and transform $\prog P$ into a program $\transf {\prog P}$ \emph{without zero tests} that,
assuming its initial valuation of counters belongs to $\ratio{=}(2(m+1), \vr b, \vr c, \vr d, \vr {\transf C})$
%
% $\vr b, \vr c, \vr d$ are initialised to:
%%
%\begin{align}\label{eq:initmult}
%\vr d = \vr b \cdot \vr c \qquad \vr b = 2k,
%\end{align}
%
for some $m\in\N$, 
%and all other counters initialised to 0,
simulates correctly $m$ zero tests (jointly) on counters $\vr x, \vr y$,
as long as their sum is bounded by the initial value of $\vr c$
(cf.~Lemma~\ref{lem:halt}). 

The transformation proceeds in three steps.
First, we accompany every increment (decrement) on $\vr x$ with a decrement (increment) of $\vr c$,
and likewise we do for $\vr y$:

\begin{quote}
\begin{tabular}{l|c}
%\hline
command & replaced by \\
\hline
\inc{\vr x} & \inc{\vr x} \quad \dec{\vr c} \\
\dec{\vr x} & \dec{\vr x} \quad \inc{\vr c} \\
%\hline
\end{tabular}
\qquad\qquad\qquad
\begin{tabular}{l|c}
%\hline
command & replaced by \\
\hline
\inc{\vr y} & \inc{\vr y} \quad \dec{\vr c} \\
\dec{\vr y} & \dec{\vr y} \quad \inc{\vr c} \\
%\hline
\end{tabular}
\end{quote}

\noindent
In the resulting program $\prog{\overline P}$ counters $\vr x$, $\vr y$ are, intuitively speaking, put on
a shared 'budget' $\vr c$.
Assuming $\vr x$ and $\vr y$ initially 0, 
this clearly enforces $\vr x + \vr y$ to not exceed the initial value of $\vr c$, and the sum
$s = \vr c + \vr x + \vr y$ to remain invariant. 

As the second step, we replace in $\prog{\overline P}$
%introduce auxiliary counters $\duch c, \duch x, \duch y$ (initially equal 0), and 
every \testz{\vr x}  command by the following macro:
% $\zerotestnew {\vr x}{\vr y}{\vr c}{\vr d}$:

\PROG{0.45}{$\zerotestnew {\vr x}{\vr y}{\vr c}{\vr d}$}{prog:zeroflushx}{
\Loop
  \State \dec{\vr y} \quad \inc{\vr x} \quad \dec{\vr d}
\EndLoop
\Loop
  \State \dec{\vr c} \quad \inc{\vr y} \quad \dec{\vr d}
\EndLoop
\Loop
  \State \dec{\vr y} \quad \inc{\vr c} \quad \dec{\vr d}
\EndLoop
\Loop
  \State \dec{\vr x} \quad \inc{\vr y} \quad \dec{\vr d}
\EndLoop
\State \sub{\vr b} 2
}
%\qquad
%%
%\PROG{0.45}{$\zerotestnew {\vr y}{\vr x}{\vr c}{\vr d}$}{prog:zeroflushx}{
%\Loop
%  \State \dec{\vr x} \quad \inc{\vr y} \quad \dec{\vr d}
%\EndLoop
%\Loop
%  \State \dec{\vr c} \quad \inc{\vr x} \quad \dec{\vr d}
%\EndLoop
%\Loop
%  \State \inc{\vr c} \quad \dec{\vr x} \quad \dec{\vr d}
%\EndLoop
%\Loop
%  \State \inc{\vr x} \quad \dec{\vr y} \quad \dec{\vr d}
%\EndLoop
%\State \sub{\vr b} 2
%}

\bigskip

\noindent
Likewise we replace every \testz{\vr y} command by an analogous macro
$\zerotestnew {\vr y}{\vr x}{\vr c}{\vr d}$ obtained from $\zerotestnew {\vr x}{\vr y}{\vr c}{\vr d}$
by swapping $\vr x$ and $\vr y$.
This yields the program $\prog{\overoverline P}$ without zero tests. 
We note that each of the two $\zerotestname$ macros preserves the sum $s = \vr c + \vr x + \vr y$, and 
decrements the counter $\vr b$ by $2$.
Furthermore, each of the two  $\zerotestname$ macros  decrements $\vr d$  
by at most $2s$  (cf.~Claim~\ref{claim:zerotest} below), and hence the macros preserve
the inequality $d \geq \vr b \cdot s$ (recall that $\vr d = \vr b \cdot s$ holds initially). 
%(Likewise one defines macros $\zerotest {\vr y}$ and $\zerotest {\vr z}$, and replaces zero-tests on
%$\vr y$ and $\vr z$ accordingly.)
%

As the final step we adjoint at the end of $\prog{\overoverline P}$
the following program $\settozero$, % (shown below on the left), 
thus obtaining the transformed program $\transf{\prog P}$:    % = \prog{\overoverline P} \ \settozero$:

\PROG{0.4}{$\settozero$}{prog:zeroflushdc}{
\Loop
  \State \dec{\vr c} \quad \sub{\vr d} 2
\EndLoop
\State $\zerotestnew {\vr c}{}{}{}$
%\vspace{2.3cm}
}

%\quad
%\PROG{0.45}{$\zerotestnew {\vr c}{\vr y}{\vr x}{\vr d}$}{prog:zeroflushx}{
%\Loop
%  \State \dec{\vr y} \quad \inc{\vr c} \quad \dec{\vr d}
%\EndLoop
%\Loop
%  \State \dec{\vr x} \quad \inc{\vr y} \quad \dec{\vr d}
%\EndLoop
%\Loop
%  \State \inc{\vr x} \quad \dec{\vr y} \quad \dec{\vr d}
%\EndLoop
%\Loop
%  \State \inc{\vr y} \quad \dec{\vr c} \quad \dec{\vr d}
%\EndLoop
%\State \sub{\vr b} 2
%}

\bigskip

\noindent
%This yields the transformed program $\transf{\prog P}$.
The macro $\zerotestnew {\vr c}{\vr y}{\vr x}{\vr d}$ %(shown below on the right) 
is obtained from $\zerotestnew {\vr x}{\vr y}{\vr c}{\vr d}$ by swapping $\vr x$ and $\vr c$.
We note that an execution of $\settozero$ may decrease the sum $\vr c + \vr x + \vr y$ (but
$\zerotestnew {\vr c}{}{}{}$ preserves it).

\para{Correctness}

Recall that $\prog{\overoverline P}$ preserves the sum $\vr c + \vr x + \vr y$;
we denote by $s$ the value of this sum.
An execution of $\zerotestnew {\vr x}{\vr y}{\vr c}{\vr d}$
is called \emph{maximally iterated} if all four loops are so.
Observe that every such execution is forcedly \emph{correct}, i.e.~satisfies:
\begin{align} \label{eq:correct}
 \ini x = \fin x = 0, \qquad  \ini y = \fin y, \qquad \ini c = \fin c.
\end{align}
(Likewise in case of $\zerotestnew {\vr y}{\vr x}{\vr c}{\vr d}$ and $\zerotestnew {\vr c}{\vr y}{\vr x}{\vr d}$.)
%(resp.~$\ini y = \fin y = 0, \ \ini x = \fin x$).
%
The idea behind $\zerotestnew {\vr x}{\vr y}{\vr c}{\vr d}$ is to flush from $\vr y$ to a zero-tested counter $\vr x$
and back, but also flush from $\vr c$ to $\vr y$ and back, in an appropriately nested way that guarantees that
the amount $\ini d - \fin d$ by which $\vr d$ is decreased equals $2s$ exactly in maximally iterated executions:
\begin{claim} \label{claim:zerotest}
Consider an  execution of $\zerotestnew {\vr x}{\vr y}{\vr c}{\vr d}$ (resp.~$\zerotestnew {\vr y}{\vr x}{\vr c}{\vr d}$) macro, 
assuming $\ini d \geq 2 s$.
Then $0 \leq \ini d - \fin d \leq 2s$. 
Furthermore, the equality $\ini d - \fin d = 2s$ holds if and only if the execution is maximally iterated. 
%$\ini x = \fin x = \iniduch c =  \finduch c =  \iniduch x =  \finduch x =  \iniduch y = \finduch y = 0$ and
%$\ini c = \fin c$ and $\ini y = \fin y$.
% and all loops in both $\flushname$ macros are iterated maximally.
\end{claim}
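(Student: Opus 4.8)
The plan is to parametrise a single complete execution of $\zerotestnew {\vr x}{\vr y}{\vr c}{\vr d}$ by the numbers $t_1, t_2, t_3, t_4 \in \N$ of iterations of its four loops, in order, and read everything off from these. (The macro $\zerotestnew {\vr y}{\vr x}{\vr c}{\vr d}$ arises by swapping $\vr x$ and $\vr y$, so that case is symmetric.) Writing $x, y, c$ for the initial values $\ini x, \ini y, \ini c$ and $s = x+y+c$, I would first record how $\vr x, \vr y, \vr c$ evolve (they are touched only by the commands visible in the macro): after loop~1, $\vr x = x+t_1$, $\vr y = y-t_1$, $\vr c = c$; after loop~2, $\vr y = y-t_1+t_2$, $\vr c = c-t_2$; after loop~3, $\vr y = y-t_1+t_2-t_3$, $\vr c = c-t_2+t_3$; after loop~4, $\vr x = x+t_1-t_4$, $\vr y = y-t_1+t_2-t_3+t_4$. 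Since counters never go negative, the flushed counter caps each loop, yielding $t_1 \le y$, $t_2 \le c$, $t_3 \le y-t_1+t_2$, $t_4 \le x+t_1$. Each loop decrements $\vr d$ once per iteration, so $\ini d - \fin d = t_1+t_2+t_3+t_4 \ge 0$, settling the lower bound. (The argument uses only nonnegativity of the counters, so it applies to any complete execution; the hypothesis $\ini d \ge 2s$ merely fixes the regime of interest and guarantees that maximally iterated executions exist.)

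For the upper bound I would telescope the flush bounds, taking care to apply $t_3 \le y-t_1+t_2$ (not the weaker $t_3 \le y+t_2$) so that the $t_1$-terms cancel:
\[
t_1+t_2+t_3+t_4 \ \le\ t_1 + t_2 + (y-t_1+t_2) + (x+t_1) \ =\ t_1 + 2t_2 + x + y \ \le\ y + 2c + x + y \ =\ x + 2y + 2c \ \le\ 2s .
\]
For the ``furthermore'' part I would then note that $\ini d - \fin d = 2s$ forces each of the three inequalities above to be an equality, which unwinds to precisely $t_3 = y-t_1+t_2$, $t_4 = x+t_1$, $t_1 = y$, $t_2 = c$, $x = 0$, that is, $x = 0$, $t_1 = t_4 = y$, $t_2 = t_3 = c$; conversely these equalities collapse the chain back to $2s$.

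The last and only delicate step is to match this system of equalities with the definition of a maximally iterated execution, i.e. with the conjunction of the conditions~\eqref{eq:max} for the four loops. Each loop is an instance of the flush loop~\eqref{prog:flush}, so for loop $i$ I would identify which counter plays the target $\vr e$ (its value just before the loop must be $0$) and which plays the source $\vr f$ (its value just after must be $0$), then substitute the tracked expressions: loop~1 ($\vr e = \vr x$, $\vr f = \vr y$) is maximally iterated iff $x = 0$ and $y-t_1 = 0$; loop~2 ($\vr e = \vr y$, $\vr f = \vr c$) iff $y-t_1 = 0$ and $c-t_2 = 0$; loop~3 ($\vr e = \vr c$, $\vr f = \vr y$) iff $c-t_2 = 0$ and $y-t_1+t_2-t_3 = 0$; loop~4 ($\vr e = \vr y$, $\vr f = \vr x$) iff $y-t_1+t_2-t_3 = 0$ and $x+t_1-t_4 = 0$. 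These eight equalities are together equivalent to $x = 0$, $t_1 = t_4 = y$, $t_2 = t_3 = c$ — exactly the system extracted above — so both directions of the ``if and only if'' follow. I expect the only real friction to be this bookkeeping: keeping the $\vr e/\vr f$ roles straight across the four loops and picking the right bounds so the inequality chain telescopes; the arithmetic itself is routine.
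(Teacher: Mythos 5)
Your proof is correct and follows essentially the same route as the paper's: both parametrise the execution by how far each of the four loops iterates, bound the total decrement of $\vr d$ by a telescoping chain of the per-loop flush bounds, and characterise equality with $2s$ as exactly the conjunction of the per-loop maximal-iteration conditions. The paper merely compresses the bookkeeping by pairing loops 1,4 and 2,3 via the intermediate value $\midd y$ of $\vr y$ after the first loop, whereas you track all four iteration counts explicitly; the content is the same.
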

\begin{claimproof}
Consider an execution of $\zerotestnew {\vr x}{\vr y}{\vr c}{\vr d}$, assuming $\ini d \geq 2 s$, and
let $\midd y$ denote the value of $\vr y$ at the exit from the first loop.
The amount by which $\vr d$ is decreased in the two loops in lines 1--2 and 7--8 
is at most
\[
\Delta_1 = 2(\ini  y - \midd y) + \ini x.
\]
Furthermore, the amount by which $\vr d$ is decreased
in the two loops in lines 3--6 is at most
\[
\Delta_2 = 2 \ini c + \midd y.
\]
The sum  $\Delta_1 + \Delta_2$ clearly satisfies  
$\Delta_1 + \Delta_2 \leq 2s = 2(\ini c + \ini x + \ini y)$.
It equals $2s$ if and only if $\Delta_1 = 2\ini y$ and $\Delta_2 = 2\ini c$, i.e., 
exactly when all four loops are maximally iterated.
%$\midd y = \ini x = 0$ and $\ini y = \fin y$ and $\ini c = \fin c$.
\end{claimproof}
In consequence, as $\vr b$ is decreased by $2$, if the invariant $\vr d = \vr b \cdot s$ is preserved by 
an execution of $\zerotestnew {\vr x}{}{}{}$ (resp.~$\zerotestnew {\vr y}{}{}{}$) then the zero test is forcedly correct.
Furthermore notice that once the invariant is violated, i.e., $\vr d > \vr b \cdot s$, due to the first part of Claim~\ref{claim:zerotest}
the invariant can not be recovered later.
These observations lead to the correctness claim stated in Lemma~\ref{lem:halt}.

In the proof of Lemma~\ref{lem:halt} we will also need the following corollary of Claim~\ref{claim:zerotest},
where $s$ denotes, as before, the sum $\vr c + \vr x + \vr y$ at the start of $\settozero(\vr c)$:
\begin{claim} \label{claim:settozeroo}
Consider an execution of $\settozero(\vr c)$, assuming $\ini d \geq 2 s$.
Then $0 \leq \ini d - \fin d \leq 2s$. 
Furthermore, the equality $\ini d - \fin d =2s$ holds if and only if
%$\fin c = \ini{d_2} = \fin {d_2} = 0$, $\ini d_1 = \fin d_1$, and all four loops
the $\zerotestnew {\vr c}{}{}{}$ macro is maximally iterated.
\end{claim}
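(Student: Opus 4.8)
The plan is to reduce Claim~\ref{claim:settozeroo} to Claim~\ref{claim:zerotest} by analysing the two stages of $\settozero(\vr c)$ separately: the initial loop that pumps from $\vr c$ to nothing while decreasing $\vr d$ by $2$ per step, and the final $\zerotestnew {\vr c}{}{}{}$ macro, which is just $\zerotestnew {\vr x}{\vr y}{\vr c}{\vr d}$ with $\vr x$ and $\vr c$ swapped. Write $\midd c$, $\midd d$ for the values of $\vr c$, $\vr d$ at the boundary between the two stages, and recall that throughout $\settozero$ the sum $s = \vr c + \vr x + \vr y$ does not increase, with equality $s$ maintained precisely up to the point where the first loop starts to run.

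First I would handle the easy inequality. In the first loop, $\vr d$ decreases by $2(\ini c - \midd c)$ and this quantity lies between $0$ and $2\ini c$; by Claim~\ref{claim:zerotest} applied to the final macro (with the roles of $\vr x$ and $\vr c$ swapped, so the relevant sum is still $s' = \midd c + \vr x + \vr y$, which is the value of the sum at the start of that macro) the second stage decreases $\vr d$ by at most $2 s'$, provided $\midd d \geq 2 s'$. Since $s' \leq s$ and $\midd d = \ini d - 2(\ini c - \midd c) \geq 2s - 2\ini c + 2\midd c = 2s'$ (using $\ini d \geq 2s$), the hypothesis of Claim~\ref{claim:zerotest} is met, so the total decrease of $\vr d$ over $\settozero$ is at most $2(\ini c - \midd c) + 2 s' = 2(\ini c - \midd c) + 2(\midd c + \vr x + \vr y)$. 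Here one must be slightly careful: the final macro may further change $\vr c$, $\vr y$, so I would use that $\vr x + \vr y$ at the start of the final macro equals $\vr x + \vr y$ at the start of $\settozero$ (the first loop touches only $\vr c$ and $\vr d$), giving the bound $2\ini c + 2(\vr x + \vr y) = 2s$. The lower bound $0 \leq \ini d - \fin d$ is immediate since every relevant command only decrements $\vr d$.

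For the characterisation of equality, I would argue in both directions. If $\ini d - \fin d = 2s$, then since the first stage contributes at most $2(\ini c - \midd c) \leq 2\ini c$ and the second at most $2s' = 2(\midd c + \vr x + \vr y)$, and these sum to at most $2s$ with equality only when both are maximal, we must have the first loop run to exhaustion ($\midd c = 0$, so $s' = s$) and the final macro achieve decrease exactly $2s' = 2s$; by Claim~\ref{claim:zerotest} the latter happens iff $\zerotestnew {\vr c}{}{}{}$ is maximally iterated. Conversely, if $\zerotestnew {\vr c}{}{}{}$ is maximally iterated, then by the correctness property~\eqref{eq:correct} (transported through the swap) we get $\ini{(\vr c)} = \fin{(\vr c)} = 0$ for that macro, forcing $\midd c = 0$ at its start, hence the first loop of $\settozero$ ran maximally (decrease $2\ini c$); and maximal iteration of the macro gives decrease exactly $2s'= 2s$ by Claim~\ref{claim:zerotest}. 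Adding the two yields $\ini d - \fin d = 2s$.

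The main obstacle I anticipate is bookkeeping the sum $s$ correctly across the stage boundary: $s$ is defined at the start of $\settozero$, but Claim~\ref{claim:zerotest} is phrased in terms of the sum at the start of the macro it is applied to, and $\settozero$'s first loop does not change $\vr x + \vr y$ but can change $\vr c$; one has to keep straight that $2s' = 2s$ exactly when the first loop is maximally iterated, which is automatically forced whenever $\zerotestnew {\vr c}{}{}{}$ is maximally iterated (since that macro preserves its own $\vr c$-value at $0$). Once this is pinned down, the rest is a direct application of Claim~\ref{claim:zerotest}.
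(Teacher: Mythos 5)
Your proof is correct and takes essentially the same route as the paper's: split $\settozero(\vr c)$ at the entry to the $\zerotestnew {\vr c}{}{}{}$ macro, observe that the first loop decreases $\vr d$ by exactly twice the drop of the sum $\vr c + \vr x + \vr y$, and apply Claim~\ref{claim:zerotest} to the macro with the boundary value of that sum. The only slight imprecision is the assertion that equality forces \emph{both} stages to be maximal: the counting argument directly forces only the macro to decrease $\vr d$ by exactly $2s'$ (the first loop always contributes exactly $2(\ini c - \midd c)$, whatever $\midd c$ is), and $\midd c = 0$ is then a consequence of the maximally iterated macro's correctness property rather than an independent requirement — but this does not affect the validity of your argument.
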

\begin{claimproof}
Consider an execution of $\settozero(\vr c)$ and
denote by $\dot s$ the value of $\vr c + \vr x + \vr y$ just before entering $\zerotestnew {\vr c}{}{}{}$.
Thus $\vr d$ decrease by $2(s - \dot s)$ before entering $\zerotestnew {\vr c}{}{}{}$.
Moreover, due to Claim~\ref{claim:zerotest}, the macro $\zerotestnew {\vr c}{}{}{}$ decreases $\vr d$ by 
at most $2\dot s$, and furthermore the macro decreases $\vr d$ by exactly $2\dot s$ if
and only if it is maximally iterated.
These observations imply the claim.
\end{claimproof}

Recall that $\vr{\transf C} = \vr C  \cup \{\vr b, \vr c, \vr d\}$.
We define the \emph{$\vr{\transf C}$-extension} of a counter valuation $v\in\cval {\vr C}$ as the extension of
$v$ where $\vr b, \vr c$ and $\vr d$ are all set to $0$. 
The $\vr{\transf C}$-extension of a set $R \subseteq \cval{\vr C}$ is defined as the set of
$\vr{\transf C}$-extensions of all valuations in $R$.

\begin{lemma} \label{lem:halt}
The following sets are equal (as subsets of\, $\cval{\vr{\transf C}}$):
\begin{itemize}
\item the $\vr{\transf C}$-extension of the set computed by $\prog P$ from $\zeroval$ using $m$ zero tests.
\item the set $\vr d$-computed by $\transf{\prog P}$ from $\ratio{=}(2(m+1), \vr b, \vr c, \vr d, \vr {\transf C})$.
\end{itemize}
\end{lemma}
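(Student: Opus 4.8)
The plan is to prove the two inclusions between the stated sets separately, using the invariant $\vr d = \vr b \cdot s$ (where $s = \vr c + \vr x + \vr y$) as the pivot.

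\textbf{From programs with $m$ zero tests to $\transf{\prog P}$.}
First I would take any run $\pi$ of $\prog P$ from $\zeroval$ that does exactly $m$ zero tests and is $\vr X$-zeroing for the relevant $\vr X$; actually, since Lemma~\ref{lem:halt} talks about the \emph{computed} set, I take a complete run $\pi$ using $m$ zero tests and show that its final valuation (extended by $\vr b = \vr c = \vr d = 0$) is $\vr d$-computed by $\transf{\prog P}$ from $\ratio{=}(2(m+1), \vr b, \vr c, \vr d, \vr{\transf C})$. I would mimic $\pi$ step by step in $\transf{\prog P}$: the increment/decrement replacements on $\vr x, \vr y$ are deterministic, and whenever $\pi$ does \testz{\vr x} I run the $\zerotestname$ macro \emph{maximally iterated}. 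By~\eqref{eq:correct} a maximally iterated macro leaves $\vr x, \vr y, \vr c$ unchanged (and $\vr x$ is indeed $0$ at that point because $\pi$ passed the zero test), so the simulation stays faithful; by Claim~\ref{claim:zerotest} each macro decreases $\vr d$ by exactly $2s$ while $\vr b$ drops by $2$, so the invariant $\vr d = \vr b \cdot s$ is maintained. After $m$ macros we have $\vr b = 2(m+1) - 2m = 2$ and $\vr d = 2s$. Now I run $\settozero$: first the flushing loop moves all of $\vr x, \vr y$ into $\dots$ wait — actually it flushes $\vr c$ down while halving $\vr d$; I would choose the nondeterministic choices in $\settozero$ so that $\zerotestnew{\vr c}{}{}{}$ is entered with $s$ having been reduced to exactly $\dot s$ and then run it maximally iterated, ending with $\vr c = \vr x = \vr y = 0$ and, by Claim~\ref{claim:settozeroo}, $\vr d = 0$ as well. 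Hence $\vr d$ ends at $0$ and the remaining counters carry $\finval\pi$, as required.

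\textbf{From $\transf{\prog P}$ back to programs with $m$ zero tests.}
This is the harder direction and where the real work lies. I take a $\vr d$-zeroing (complete) run $\rho$ of $\transf{\prog P}$ from a valuation in $\ratio{=}(2(m+1), \vr b, \vr c, \vr d, \vr{\transf C})$, and must extract from it a run of $\prog P$ from $\zeroval$ using exactly $m$ zero tests with the same final counter values. The budget accounting shows $\vr x + \vr y \le \ini c$ throughout, so $s \le \ini c < \infty$. The key point is a \emph{forcing} argument: along $\rho$ the counter $\vr b$ only decreases, by $2$ each time a $\zerotestname$ macro runs, and $\settozero$ runs exactly once at the end; since $\vr d$ must reach $0$, and each macro plus $\settozero$ decreases $\vr d$ by \emph{at most} $2s$ (Claims~\ref{claim:zerotest}, \ref{claim:settozeroo}) while $\vr d$ starts at $\ini b \cdot s = 2(m+1)s$, the run must execute exactly $m$ macros and each of them, together with the $\settozero$ at the end, must achieve the maximal decrease $2s$ — otherwise the total drop of $\vr d$ is strictly less than $2(m+1)s$ and $\vr d$ cannot hit $0$. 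Here one must be careful that $s$ may itself shrink inside $\settozero$; the bound in Claim~\ref{claim:settozeroo} is stated relative to the $s$ at the \emph{start} of $\settozero$, which is still the global $s$ since $\prog{\overoverline P}$ preserves it, so the counting goes through. Maximality then forces, via Claim~\ref{claim:zerotest} and~\eqref{eq:correct}, that every simulated \testz{\vr x} occurs with $\vr x = 0$ and leaves $\vr x, \vr y, \vr c$ untouched — i.e.\ every zero test in $\rho$ is \emph{correct}. Stripping the $\vr c$-bookkeeping and the macros from $\rho$ then yields a genuine run of $\prog P$ from $\zeroval$ doing exactly $m$ (now legitimate) zero tests, and since $\settozero$'s $\zerotestnew{\vr c}{}{}{}$ is maximally iterated we get $\vr c = 0$ at the end, which together with the budget invariant forces $\vr x = \vr y = 0$ finally, matching the $\vr{\transf C}$-extension on $\vr b, \vr c, \vr d$.

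\textbf{Main obstacle.}
I expect the delicate part to be the forcing argument in the second direction: making fully rigorous that \emph{every} macro execution and the single $\settozero$ execution must be maximally iterated, handling the subtlety that $s$ is preserved by $\prog{\overoverline P}$ but may be decreased by the flushing loop inside $\settozero$, and checking that no "cheating" is possible by, e.g., not running the required number of macros or running extra ones (which is precluded because the macros sit in fixed positions of the program and $\vr b$ would go negative, or $\vr d$ could not be exhausted). The increment/decrement simulation and the composition bookkeeping are routine; the entire weight of correctness rests on the $\vr d = \vr b \cdot s$ invariant being both preserved (easy direction) and, more importantly, impossible to repair once broken, which is exactly the content of the first halves of Claims~\ref{claim:zerotest} and~\ref{claim:settozeroo}.
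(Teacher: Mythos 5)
Your proposal follows the paper's proof essentially step for step: the forward direction by an explicit simulation in which every macro is maximally iterated (so the invariant $\vr d = \vr b\cdot s$ is preserved and $\vr d$ can be driven to $0$), and the backward direction by the counting argument that compares the initial value $\vr d = 2(m+1)\cdot s$ with the at-most-$2s$ decrease per macro (first parts of Claims~\ref{claim:zerotest} and~\ref{claim:settozeroo}) and the $\vr b$-budget of $2(m+1)$, forcing exactly $m$ macro executions, each maximally iterated and hence correct. This is the paper's argument, including the point that $s$ may shrink inside $\settozero$ while the bound is taken at its entry.

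One assertion in your write-up is false, though harmlessly so. You claim, in both directions, that the run of $\transf{\prog P}$ ends with $\vr x = \vr y = 0$, in the backward direction deducing this from $\vr c = 0$ ``together with the budget invariant''. That inference fails: the invariant $\vr c + \vr x + \vr y = s$ holds throughout $\prog{\overoverline P}$ but is deliberately broken by the flushing loop of $\settozero$, which empties $\vr c$ while leaving $\vr x$ and $\vr y$ untouched; afterwards $\vr x + \vr y$ equals whatever the simulated run of $\prog P$ ends with, which need not be $0$. Fortunately the lemma does not ask for $\vr x = \vr y = 0$: the $\vr{\transf C}$-extension only sets $\vr b, \vr c, \vr d$ to $0$ and keeps the final values of all counters of $\prog P$, including $\vr x$ and $\vr y$ (note also that in the forward direction this claim contradicts your own conclusion that the remaining counters carry $\finval\pi$). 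The correct conclusion is simply $\fin b = \fin c = \fin d = 0$ --- the first from the count of macro executions, the second from maximal iteration of $\zerotestnew {\vr c}{}{}{}$, the third by assumption --- with the other counters carrying $\finval\pi$. Deleting the spurious claim leaves a proof that coincides with the paper's.
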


\begin{proof}
For the inclusion of the former set in the latter, we show that
for each complete run $\pi$ of $\prog P$ from $\zeroval$ that does $m$ zero tests on 
$\vr x, \vr y$,
there is a corresponding $\vr d$-zeroing run of $\transf{\prog P}$ from  
$\ratio{=}(2(m+1), \vr b, \vr c, \vr d, \vr {\transf C})$, 
for any initial value $\ini c$ at least as large as the maximal value of the sum $\vr x + \vr y$ along $\pi$. 
The run iterates maximally $\zerotestnew {\vr x}{}{}{}$ and $\zerotestnew {\vr y}{}{}{}$ macros,
decrements $\vr c$ to $0$ in line 2 in $\settozero(\vr c)$, and 
then iterates maximally $\zerotestnew {\vr c}{}{}{}$.
Thus the final counter valuation of the run is the $\vr{\transf C}$-extension of the final counter valuation of $\pi$.

For the converse direction, consider a $\vr d$-zeroing run $\pi$ of $\transf{\prog P}$\, from 
$\ratio{=}(2(m+1), \vr b, \vr c, \vr d, \vr {\transf C})$.
The initial counter valuation satisfies the equalities $\vr b = 2(m+1)$ and $\vr d = 2(m+1) \cdot s$.
Each execution of $\zerotestnew {\vr x}{}{}{}$ or $\zerotestnew {\vr y}{}{}{}$ or $\settozero(\vr c)$
decreases $\vr b$ by $2$, and $\vr d$ by at most $2 s$
(by the first part of Claims~\ref{claim:zerotest} and Claim~\ref{claim:settozeroo}).
Therefore, since $\vr b$ and $\vr d$ are not affected elsewhere and
$\fin d = 0$ finally, we deduce:

\begin{claim} \label{claim:2ss}
Each execution of $\zerotestnew {\vr x}{}{}{}$, $\zerotestnew {\vr y}{}{}{}$  or 
$\zerotestnew {\vr c}{}{}{}$  in $\pi$
decreases $\vr d$ by \emph{exactly} $2 s$.
\end{claim}
\begin{claim}  \label{claim:corrzero2}
There are exactly $m$ executions of $\zerotestnew {\vr x}{}{}{}$ or $\zerotestnew {\vr y}{}{}{}$ in $\pi$.
\end{claim}
\begin{claim} \label{claim:finb}
Finally, $\fin b = 0$.
\end{claim}

By Claim~\ref{claim:2ss} and the second part of Claim~\ref{claim:zerotest} we derive:
\begin{claim} \label{claim:corrzero1}
Each execution of $\zerotestnew {\vr x}{}{}{}$ in $\pi$ is correct, i.e. satisfies the equalities~\eqref{eq:correct}.
Likewise for $\zerotestnew {\vr y}{}{}{}$.
\end{claim}
Analogously, by Claim~\ref{claim:2ss} and the second part of Claim~\ref{claim:settozeroo} we derive:
\begin{claim} \label{claim:finc}
Finally, $\fin c = 0$.
\end{claim}

Due to Claims~\ref{claim:corrzero2} and~\ref{claim:corrzero1}, once we project away from $\pi$ 
%we remove from $\transf{\prog P}$ 
the counters $\vr b, \vr c, \vr d$, we obtain
a complete run of $\prog P$ from $\zeroval$ that does exactly $m$ zero tests, as required.
Finally, due to Claims~\ref{claim:finb} and~\ref{claim:finc}, the $\vr{\transf C}$-extension of the final counter valuation 
of the obtained run is exactly the final counter valuation of $\pi$.
\end{proof}

\section{Computing a large multiplier (Proof of Theorem~\ref{thm:multrelaxed})} \label{sec:proofmult} %(using $6k$ counters)}

The proof proceeds by combining the concept of amplifier lifting of~\cite{CO} with 
the program transformation of Section~\ref{sec:transf}.

\para{Amplifiers}
Let $F : \Ne \to \Ne$ be a monotone function satisfying $F(n)\geq n$ for $n\in\Ne$.
Informally speaking, an \emph{$F$-amplifier} is a program without zero tests that computes
the ratio of $F(B)$ from the ratio of $B$, for every $B\in\Ne$.

\begin{definition}
Consider a program $\prog P$ with counters $\vr C$ without zero tests and distinguished three \emph{input counters}
$\vr b, \vr c, \vr d\in \vr C$ and three \emph{output counters} $\vr b', \vr c', \vr d'\in \vr C$.
The program is called {\bf $F$-amplifier} if for every $B\in\Ne$, 
%\smallskip
%\begin{itemize}
%\item
it $\vr d$-computes from $\ratio{=}(B, \vr b, \vr c, \vr d, \vr C)$ the set $\ratio{\geq}(F(B), \vr b', \vr c', \vr d', \vr C)$.
%%and moreover 
%\item
%it has no $\vr d$-zeroing run from $\ratio{>}(B, \vr b, \vr c, \vr d, \vr C)$.
%%\item
%%in each $\vr d$-zeroing run from $\ratio{=}(B, \vr b, \vr c, \vr d, \vr C)$, the value of $\vr d + \vr d'$ is maximized at the end. 
%\end{itemize}
%
%The program $\prog P$ is a \emph{strict $F$-amplifier} if for every $B\in\Ne$, 
%it $\vr d$-computes from the set $\ratio{=}(B, \vr b, \vr c, \vr d, \vr C)$ the set $\ratio{=}(F(B), \vr b', \vr c', \vr d', \vr C)$.
%
%\smallskip
\end{definition}
We note that no condition is imposed on $\vr d$-zeroing runs from counter valuations not
belonging to any set $\ratio{\geq}(B, \vr b, \vr c, \vr d, \vr C)$.
%The definition implies (but is not equivalent to):
%%
%\begin{claim} \label{claim:computes}
%Every $F$-amplifier 
%$\vr d$-computes from the set $\ratio{\geq}(B, \vr b, \vr c, \vr d, \vr C)$ the set $\ratio{\geq}(F(B), \vr b', \vr c', \vr d', \vr C)$,
%\end{claim}
%
As an example, consider the following program $\linampl \ell$, for $\ell \in \Npar 1$, with input counters
$\vr b, \vr c, \vr d$ and output counters $\vr b', \vr c', \vr d'$:

\PROG{0.9}{Program $\linampl \ell(\vr b, \vr c, \vr d, \vr b', \vr c', \vr d')$}{prog:firstampl}{
\Loop
%  \State $\flush {\vr c} {\vr c'} {\vr d}$
  \Loop
    \State \dec{\vr c} \quad \inc{\vr c'} \quad \dec{\vr d} \quad \add{\vr d'} \ell
  \EndLoop
  \Loop
    \State \dec{\vr c'} \quad \inc{\vr c} \quad \dec{\vr d} \quad \add{\vr d'} \ell
  \EndLoop
  \State \sub{\vr b} 2 \quad \add{\vr b'} {2\ell}
\EndLoop
\Loop
  \State \sub{\vr c} 1 \quad \inc{\vr c'} \quad \sub{\vr d} 2 \quad \add{\vr d'} {2\ell}
\EndLoop
\State \sub{\vr b} 2 \quad \add{\vr b'} {2\ell} 
%\Loop
%  \State \add{\vr d'} {1}
%\EndLoop
}

\bigskip

\begin{claim} \label{claim:F1}
The above program is an $L_\ell$-amplifier, where $L_\ell : \Ne \to \Ne = (x\mapsto \ell  \cdot x)$.
%, and hence also a relaxed one.
\end{claim}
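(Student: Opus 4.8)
The plan is to verify that the program $\linampl\ell$ meets the definition of an $L_\ell$-amplifier by showing, for every $B\in\Ne$, that the set $\vr d$-computed from $\ratio{=}(B,\vr b,\vr c,\vr d,\vr C)$ is exactly $\ratio{\geq}(\ell\cdot B,\vr b',\vr c',\vr d',\vr C)$. I would prove the two inclusions separately. For the ``$\supseteq$'' direction I would exhibit, for each target valuation in $\ratio{\geq}(\ell B,\vr b',\vr c',\vr d',\vr C)$, a concrete $\vr d$-zeroing run: run the inner body loop in lines~2--7 exactly $B/2$ times (the outer loop count being forced by the fact that each pass decrements $\vr b$ by $2$ and $\vr b$ must reach $0$), on each pass flushing $\vr c$ fully to $\vr c'$ and back so that each of the two inner loops is maximally iterated in the sense of the Maximal iteration paragraph, and choosing in the final loop (lines~9--11) how much of $\vr c$ to move into $\vr c'$ so as to hit the desired (nondeterministic, hence ``$\geq$'') values of $\vr c'$ and $\vr d'$. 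Tracking the bookkeeping: starting from $\vr b=B$, $\vr c=C$, $\vr d=BC$, each outer pass decreases $\vr b$ by $2$, preserves $\vr c$, decreases $\vr d$ by exactly $2C$ (if maximally iterated), and increases $\vr b'$ by $2\ell$ and $\vr d'$ by $2\ell C$; after $B/2$ passes we have $\vr b=0$, $\vr d=0$, $\vr b'=\ell B$, $\vr d'=\ell BC$, and then the tail loop plus its trailing $\sub{\vr b}2\ \add{\vr b'}{2\ell}$ is what produces the final $\ratio{\geq}$ set with $\vr c'$ ranging over all values up to $C$ and $\vr d'=\ell\cdot\vr b'\cdot\vr c'$ forced by the inequality $\vr d\geq\ldots$ collapsing to equality when $\vr d$ is zeroed. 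I should double-check the exact accounting of the last two lines so the equality $\vr d' = \vr b'\cdot\vr c'$ comes out with the right constant $\ell$.

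For the ``$\subseteq$'' direction, the key point is the same squeeze argument as in Claim~\ref{claim:zerotest}: along any $\vr d$-zeroing run, $\vr d$ starts at $BC$ and must reach $0$, while each of the $B/2$ outer passes can decrease $\vr d$ by \emph{at most} $2C$ (each inner loop decreasing $\vr d$ by at most what is flushed, with equality iff maximally iterated), and the tail loop decreases $\vr d$ by at most $2$ per iteration against a $\vr c$ bounded by $C$; summing, the total available decrease is at most $2C\cdot(B/2)+2C = $ (the right amount), so every inner loop must in fact be maximally iterated and the tail loop must consume all of the remaining budget. Maximal iteration then forces $\ini c = \fin c$ on every pass (so $\vr c$ is genuinely preserved and equals $C$ throughout the outer phase), forces $\vr b$ to reach $0$ exactly after $B/2$ passes, and pins down $\vr b' = \ell B$ and the relation $\vr d' = \ell\cdot\vr b'\cdot\vr c'$ at the end; since the run must further satisfy $\vr c'>0$ (it is only constrained to be a member of a ratio set, where $\vr c>0$ is required — here I should recheck whether $\vr c' = 0$ is excluded or whether the tail loop guarantees $\vr c'\geq 1$ via at least one mandatory iteration, which may need the convention that $B\geq 2$, consistent with $B\in\Ne$), the resulting valuation lies in $\ratio{\geq}(\ell B,\vr b',\vr c',\vr d',\vr C)$, and all counters outside $\{\vr b',\vr c',\vr d'\}$ are $0$ because they were $0$ initially and $\linampl\ell$ never touches them.

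The main obstacle I expect is not any deep idea but the careful arithmetic of the ``$\geq$'' versus ``$=$'' boundary: making sure that the tail loop (lines~8--10) together with the final $\sub{\vr b}2\ \add{\vr b'}{2\ell}$ does exactly the work needed to turn the equality-ratio hypothesis on the inputs into the relaxed inequality-ratio conclusion on the outputs, with $\vr d'$ equal to $\ell$ times the product $\vr b'\cdot\vr c'$ and $\vr c'$ free to be any positive value up to the (arbitrary, nondeterministically chosen) initial $C$. A secondary subtlety is confirming that the squeeze forces the outer loop to run \emph{exactly} $B/2$ times and not fewer — this follows because fewer passes would leave $\vr b>0$ with no further way to decrease it except the single trailing $\sub{\vr b}2$, and more passes are impossible since $\vr b$ cannot go negative — so the claim reduces to checking that $B$ even is guaranteed (it is, since $B\in\Ne$ means $B$ is a multiple of $4$), after which the whole verification is a routine induction on the number of outer iterations.
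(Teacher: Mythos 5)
Your overall strategy---a budget/squeeze argument on $\vr d$ forcing every loop to be maximally iterated---is the same as the paper's (which only sketches it), but two points in your write-up are genuinely wrong rather than mere bookkeeping to be tidied.

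First, your mechanism for the ``$\supseteq$'' inclusion is incorrect. You propose to reach the various target valuations of $\ratio{\geq}(\ell B,\vr b',\vr c',\vr d',\vr C)$ from a given initial valuation by choosing ``how much of $\vr c$ to move into $\vr c'$'' in the tail loop, so that $\vr c'$ ``ranges over all values up to $C$''. In a $\vr d$-zeroing run this freedom does not exist: the tail loop is the last place where $\vr d$ is decremented, so it must itself be maximally iterated, which pins $\fin c = 0$ and $\fin {c'} = C$ exactly---as your own ``$\subseteq$'' argument correctly observes (``the tail loop must consume all of the remaining budget''). The two halves of your proposal thus contradict each other. From a fixed input $(B, C, BC, 0,0,0)$ the program $\vr d$-computes the \emph{single} valuation $(0,0,0,\ell B, C, \ell BC)$; the spread of the output set comes from the fact that the input set already contains one valuation for every $C>0$, not from nondeterminism inside a single run. (A final valuation with $\fin c>0$ would in any case not lie in the output ratio set, which requires all counters outside $\set{\vr b',\vr c',\vr d'}$ to be $0$.)

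Second, the squeeze as you state it does not close. Because of the trailing command $\coresub{\vr b}{2}$ after the tail loop, the outer loop runs exactly $B/2-1$ times, not $B/2$; with your count the available decrease of $\vr d$ is $2C\cdot(B/2)+2C = BC+2C > BC$, which leaves slack and therefore does \emph{not} force maximal iteration. With $B/2-1$ passes the bound is $2C(B/2-1)+2C = BC$, exactly the required decrease, and only then does $\fin d=0$ force every loop to be maximal, yielding $\fin b=0$, $\fin{b'}=\ell B$, $\fin{c'}=C$ and $\fin{d'}=\ell\cdot(\ini d-\fin d)=\ell BC$. You flagged the accounting as needing a recheck, but as written both the construction of the witnessing run and the forcing argument fail on this off-by-one.
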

\begin{claimproof}[Proof sketch]
%As the last loop in lines 10--11 does not affect $\vr d$, it is enough to analyse the program %$\prog L'$,
%obtained by removing these two lines.
%Similarly as in the proof of Lemma~\ref{lem:ampl-pres} one argues that
%the program has no $\vr d$-zeroing run from the set $\ratio{>}(B, \vr b, \vr c, \vr d, \vr C)$; moreover,
Writing counter valuations as vectors $(\vr b, \vr c, \vr d, \vr b', \vr c', \vr d')$, 
one shows that
the program $\vr d$-computes, from the set containing just one counter valuation
$(B, c, d, 0, 0, 0)$, the set containing one counter valuation
$(0, 0, 0, \ell\cdot B, c, \ell \cdot d)$.
Indeed, as $\vr d = 0$ finally, each of the two inner loops in lines 2--5, as well as the last loop in lines 7--8,
is forcedly maximally iterated.
\end{claimproof}
% 
%\begin{proof}
%As the last loop in lines 10--11 does not affect $\vr d$, it is enough to show that the program $\prog L'$,
%obtained by removing these two lines,
%\smallskip
%\begin{itemize}
%\item
%$\vr d$-computes from the set $\ratio{=}(B, \vr b, \vr c, \vr d, \vr C)$ the set $\ratio{=}(\ell \cdot B, \vr b', \vr c', \vr d', \vr C)$,
%\item
%has no $\vr d$-zeroing run from the set $\ratio{>}(B, \vr b, \vr c, \vr d, \vr C)$.
%\end{itemize}
%\smallskip
%Concerning the first point, we consider all $\vr d$-zeroing runs from from $\ratio{=}(B, \vr b, \vr c, \vr d, \vr C)$.
%As $\vr d$ and $\vr d'$ are changes simultaneously, we have
%%
%\begin{align}
%\fin d' = \ell \cdot \ini d.
%\end{align}
%%
%We observe that $s = \vr c + \vr c'$ remains constant, 
%$\ini d = \ini b \cdot s$ initially, and $\fin d = 0$ finally, which implies that $\vr d$ is decreased by $s$
%in each iteration of the loop in lines 2--3, and likewise in each iteration of the loop in lines 4--5.
%In other words, each execution of these two loop is necessarily maximally iterated.
%Likewise, the loop in lines 7--8 is maximally iterated as well.
%In consequence, we have
%\[
%\fin c' = \ini c \qquad \fin c = 0 \qquad \fin b' = \ell \cdot \ini b \qquad \fin b = 0
%\]
%which implies the first point.
%
%\end{proof}
%
Putting $\ell=1$ we get an identity-amplifier $\linampl 1(\vr b, \vr c, \vr d, \vr b', \vr c', \vr d')$.

%A program $\prog P$ is called a \emph{relaxed $F$-amplifier} if it has no $\vr d$-zeroing runs from ill-initialised 
%valuations of counters; and moreover, once the initial value $\ini b$  is fixed but $\ini c$ and $\ini d$ are varying,
%the union of all relations $\vr d$-computed from all well-initialised valuations of counters is:
%\[
%%\dfin = \bfin = \cfin = 0 \qquad 
%\fin b' = F(\ini b),\qquad \fin d' \geq \fin b' \cdot \fin c', \qquad
%\text{all other counters are equal 0}.
%%\vr d = \vr b = \vr c = 0 \qquad \vr b' = F(b_0) \qquad c_0 = \vr d' \geq \vr b' \cdot \vr c',
%\]
%In intuitive words, arbitrarily large value of $\fin d' \geq \fin b' \cdot \fin c'$ is computable once $\ini c$ is large enough. 
%%Thus the final values of all counters in such a run are determined uniquely by the initial values,
%%except for $\fin c', \fin d'$.

\para{Amplifier lifting}
Recall the definition~\eqref{eq:F1} of functions $\F i$; in particular $\F 1 = L_2$.
%For a function $F: \Ne\to\Ne$, we define the successor function $\successor F : \Ne\to\Ne$ as 
%%
%\begin{align} \label{eq:F2}
%\successor F (n) = \underbrace{F \circ F \circ \ldots \circ F}_{n / 4}(4).
%\end{align}
%%
%Relying on the equality $\F i (4) = 8$, for $i\in\Npar 1$, we note:
%%
%\begin{claim} \label{claim:Fsucc}
%$\F {i+1} = \successor{{\F i}}$, for every $i\in\Npar 1$, where
%$\successor F (n) = \underbrace{F \circ F \circ \ldots \circ F}_{n / 4}(4)$.
%\end{claim}
%
Let $\prog P$ be a program with counters $\vr C$, without zero tests, with distinguished input counters
$\vr b_1, \vr c_1, \vr d_1 \in \vr C$ and output counters $\vr b_2, \vr c_2, \vr d_2 \in \vr C$.
We describe a transformation of the program $\prog P$ 
to a program $\successor{{\prog P}}$, also without zero tests,
such that
assuming that $\prog P$ is an $F$-amplifier for some function $F : \Ne \to \Ne$, 
the program $\successor{{\prog P}}$ is an $\successor F$-amplifier.
The program $\successor{{\prog P}}$ uses, except for the counters of $\prog P$, three fresh counters $\vr b, \vr c, \vr d$.
Thus counters of $\successor{{\prog P}}$ are $\vr{\transf C} = \vr C \cup \set{\vr b, \vr c, \vr d}$.
We let input counters of $\successor{{\prog P}}$ be $\vr b, \vr c, \vr d$, and its output counters be 
$\vr b_2, \vr c_2, \vr d_2$.

In the transformation we use 
%a copy $\prog P'$ of $\prog P$ with 
%$\vr b_1, \vr c_1, \vr d_1$ swapped with $\vr b_2, \vr c_2, \vr d_2$, respectively;
%as input counters of $\prog P'$ we take $\vr b_2, \vr c_2, \vr d_2$, and as output counters we take 
%$\vr b_1, \vr c_1, \vr d_1$.
%Note that if $\prog P$ is an $F$-amplifier then $\prog P'$ is also so.
%
the identity-amplifier $\linampl{} = \linampl 1(\vr b_2, \vr c_2, \vr d_2, \vr b_1, \vr c_1, \vr d_1)$ 
with input counters $\vr b_2, \vr c_2, \vr d_2$ and output counters $\vr b_1, \vr c_1, \vr d_1$,
and
the $4$-multiplier $\prog M = \mult 4 (\vr b_1, \vr c_1, \vr d_1)$ of Example~\ref{ex:mult}, both without zero tests.
%We also use the  $\zerotestname$ macros with $\vr d_1$ and $\vr d_2$ in place of $\vr x$ and $\vr y$.
%
%We start by transforming $\prog P$ to
%$\prog {\overline P}$, by accompanying every increment (decrement) of $\vr d_1$ or $\vr d_2$
%with decrement (increment) of $\vr c$ (similarly as in the previous section).
%Likewise, we transform the identity-amplifier $\linampl{}$ to $\overline {\linampl{}}$, and
%the multiplier $\prog M$ to $\overline{\prog M}$.
%Likewise we transform $\prog P'$ to $\prog{\bar P'}$.
%
%Using $\prog{\overline P}$, $\overline {\linampl{}}$, $\prog{\overline M}$ and a macro $\settozero$ 
%as building blocks, we define a counter program $\successor {{\prog P}}$ as follows:
%
%\PROG{0.4}{Program $\successor {{\prog P}}$}{prog:ampl}{
%\State $\prog{\overline M}$  % $\initeightmult$
%\Loop
%  \State $\prog{\overline P}$
%  \State $\zerotestnew {\vr d_1}{}{}{}$
%  \State $\overline {\linampl{}}$  % $\linampl 1(\vr b_2, \vr c_2, \vr d_2, \vr b_1, \vr c_1, \vr d_1)$  % $\prog{\bar P'}$
%  \State $\zerotestnew {\vr d_2}{}{}{}$
%\EndLoop
%%\Loop
%%  \State \dec{\vr c} \quad \sub{\vr d} 4
%%\EndLoop
%\State $\settozero$
%}
%\quad
%\PROG{0.4}{$\settozero$}{prog:zeroflushdc}{
%\Loop
%  \State \dec{\vr c} \State \sub{\vr d} 4
%\EndLoop
%\State $\zerotestnew {\vr c}{}{}{}$
%\State $\zerotestnew {\vr c}{}{}{}$
%%
%%\State $\flush {\vr d_1} {\vr d_2} {\vr d}$
%%\State $\flush {\vr d_1} {\vr d_2} {\vr d}$
%%\State \sub{\vr b} 4
%\vspace{0.8cm}
%}
%
%\bigskip
%
%
For defining the program $\successor{{\prog P}}$ we apply 
the transformation of Section~\ref{sec:transf}  
(with counters $\vr d_1$ and $\vr d_2$ in place of $\vr x$ and $\vr y$)
to the following program $\prog Q$
built using $\prog P$, $\linampl{}$ and $\prog M$:

\PROG{0.45}{Program $\prog Q$}
{prog:amplzero}{
\State $\prog M$
\Loop
  \State $\prog P$
  \State \testz{\vr d_1}
  \State $\linampl{}$  % $\linampl 1(\vr b_2, \vr c_2, \vr d_2, \vr b_1, \vr c_1, \vr d_1)$  % $\prog{\bar P'}$
  \State \testz{\vr d_2}
\EndLoop
\State $\prog P$
\State \testz{\vr d_1}
\vspace{0.3cm}
}
\quad
\PROG{0.4}{Program $\successor {{\prog P}}$}{prog:ampl}{
\State $\prog{\overline M}$  % $\initeightmult$
\Loop
  \State $\prog{\overline P}$
  \State $\zerotestnew {\vr d_1}{}{}{}$
  \State $\overline{\linampl{}}$  % $\linampl 1(\vr b_2, \vr c_2, \vr d_2, \vr b_1, \vr c_1, \vr d_1)$  % $\prog{\bar P'}$
  \State $\zerotestnew {\vr d_2}{}{}{}$
\EndLoop
\State $\prog {\overline P}$
\State $\zerotestnew {\vr d_1}{}{}{}$
%  \State \dec{\vr c} \quad \sub{\vr d} 4
%\EndLoop
\State $\settozero$
}

\bigskip

\noindent
Formally, $\successor{{\prog P}} = \transf{\prog Q}$.
Intuitively speaking, the program $\prog Q$ directly implements the computation of $\successor F$ according
to the definition: with $2\ell+1$ zero tests it computes, from $\zeroval$, the ratio of  $F^{\ell+1}(4)$.
Note that counters of $\prog Q$ are $\vr C$ while counters of $\successor{{\prog P}}$ are $\transf{C}$.
%
%Note that $\prog{\overline M}$ initialises counters $\vr C$ to $\ratio{\geq}(8, \vr b_1, \vr c_1, \vr d_1, \vr C)$.
%
Lemma~\ref{lem:ampl-pres} states the crucial amplifier-lifting property of the program transformation $\prog P \mapsto \successor{{\prog P}}$.
\begin{lemma} \label{lem:ampl-pres}
If $\prog P$ is an $F$-amplifier, then $\successor{{\prog P}}$ is an $\successor F$-amplifier.
\end{lemma}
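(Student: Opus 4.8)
The goal is to show that the transformation $\prog P \mapsto \successor{\prog P} = \transf{\prog Q}$ turns an $F$-amplifier into an $\successor F$-amplifier. The plan is to first analyse the auxiliary program $\prog Q$ \emph{with} zero tests, establishing exactly what it computes using a prescribed number of zero tests, and then invoke Lemma~\ref{lem:halt} to transfer this to the zero-test-free program $\transf{\prog Q}$. Concretely, fix $B\in\Ne$ and consider a run of $\prog Q$ starting from a valuation in $\ratio{=}(B, \vr b, \vr c, \vr d, \vr C)$, where here $\vr b, \vr c, \vr d$ are the input counters of $\prog Q$ (namely $\vr b_1, \vr c_1, \vr d_1$). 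Actually the cleaner route, matching the intuition stated after the construction, is to start from $\zeroval$: the leading $4$-multiplier $\prog M$ produces the ratio of $4$ on $\vr b_1, \vr c_1, \vr d_1$, and then the loop body alternately runs $\prog P$ (an $F$-amplifier, so it transforms the ratio of the current value $v$ on $\vr b_1,\vr c_1,\vr d_1$ into the ratio of $F(v)$ on $\vr b_2,\vr c_2,\vr d_2$, provided the intervening $\testz{\vr d_1}$ succeeds, which forces the amplifier's input run to be $\vr d_1$-zeroing), followed by the identity-amplifier $\linampl{}$ which copies $\vr b_2,\vr c_2,\vr d_2$ back onto $\vr b_1,\vr c_1,\vr d_1$ (again gated by $\testz{\vr d_2}$).

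\textbf{Key steps.} First I would prove, by induction on the number $\ell\geq 0$ of full iterations of the loop in $\prog Q$, that $\prog Q$ run from $\zeroval$ using exactly $2\ell+1$ zero tests $\vr d_2$-computes the set $\ratio{\geq}(F^{\ell+1}(4), \vr b_2, \vr c_2, \vr d_2, \vr C)$ --- wait, more carefully: the final $\prog P$ together with the trailing $\testz{\vr d_1}$ is the $(2\ell+1)$-th zero test and produces output on $\vr b_2,\vr c_2,\vr d_2$, so $\prog Q$ with $2\ell+1$ zero tests computes $\ratio{\geq}(F^{\ell+1}(4), \vr b_2, \vr c_2, \vr d_2, \vr C)$ from $\zeroval$. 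The induction base $\ell=0$ uses the defining property of $\prog M$ (Definition~\ref{def:mult}, Example~\ref{ex:mult}, via Claim~\ref{claim:mult}) to get the ratio of $4$, then the amplifier property of $\prog P$. The inductive step appends one more loop body: $\linampl{}$ transports $\ratio{\geq}(F^{\ell+1}(4),\ldots)$ on the output counters of $\prog P$ back to $\ratio{\geq}(F^{\ell+1}(4),\ldots)$ on the input counters (here one needs that the identity-amplifier indeed $\vr d_2$-computes the ratio of $B$ on its output counters from the ratio of $B$ on its input counters, with the $\testz{\vr d_2}$ forcing $\vr d_2$-zeroing; note that from a $\ratio{\geq}$ set one may need to observe the identity-amplifier works from $\ratio{=}$ on each individual valuation and take the union), and then another $\prog P$ amplifies to $\ratio{\geq}(F^{\ell+2}(4),\ldots)$. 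Taking the union over all $\ell\geq 0$, $\prog Q$ from $\zeroval$ using any number of zero tests computes $\bigcup_{\ell\geq 0}\ratio{\geq}(F^{\ell+1}(4),\ldots) = \bigcup_{m\geq 0}\ratio{\geq}(F^{\ell+1}(4),\ldots)$; and by definition~\eqref{eq:F1}, $\successor F(4n) = F^{n}(4)$, so for $n$ ranging over $\Ne$ the values $F^{\ell+1}(4)$ are exactly $\successor F(4(\ell+1)) = \successor F$ evaluated at multiples of $4$ --- one must match this up with the required output $\ratio{\geq}(\successor F(B),\ldots)$, which means the statement should really be read relative to input $B$: redo the above starting not from $\zeroval$ but from $\ratio{=}(B, \vr b,\vr c,\vr d,\vr C)$ so that $\prog M$'s role is played by the input, giving $F^{B/4}(4) = \successor F(B)$ after $B/2 - 1 = 2\cdot(B/4) - 1$... let me just say: one checks that starting from the ratio of $B$, the loop runs $B/4$ times through $\prog P$, using $B/2-1$ zero tests, producing the ratio of $\successor F(B)$ on $\vr b_2,\vr c_2,\vr d_2$.

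\textbf{Transfer to $\transf{\prog Q}$.} With the zero-test-count bookkeeping pinned down --- $\prog Q$ from the ratio of $B$ on its input counters, using exactly $m = B/2 - 1$ zero tests (so $B = 2(m+1)$), $\vr d_2$-computes $\ratio{\geq}(\successor F(B), \vr b_2,\vr c_2,\vr d_2,\vr C)$ --- I would apply Lemma~\ref{lem:halt} with $\vr x := \vr d_1$, $\vr y := \vr d_2$, which says that $\transf{\prog Q} = \successor{\prog P}$ run from $\ratio{=}(2(m+1), \vr b, \vr c, \vr d, \vr{\transf C}) = \ratio{=}(B, \vr b, \vr c, \vr d, \vr{\transf C})$ $\vr d$-computes the $\vr{\transf C}$-extension of whatever $\prog Q$ computes from $\zeroval$ using $m$ zero tests. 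Here there is a slight mismatch to reconcile: Lemma~\ref{lem:halt} is phrased for runs of $\prog P$ from $\zeroval$, whereas I want $\prog Q$ from the ratio of $B$ on its \emph{input} counters; the fix is that the input ratio on $\vr b_1,\vr c_1,\vr d_1$ is itself what the leading $\prog M$ manufactures from $\zeroval$ when $B=4$, and for general $B$ one instead keeps the analysis with $\prog Q$ genuinely started from $\zeroval$ and reads off that the $F^{\ell+1}(4)$ obtained with $m=2\ell+1$ zero tests equals $\successor F(4(\ell+1)) = \successor F(B)$ with $B = 4(\ell+1) = 2(m+1)$ --- this is consistent, $m=2\ell+1$ gives $2(m+1) = 4(\ell+1)$. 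So Lemma~\ref{lem:halt} directly yields: $\successor{\prog P}$ from $\ratio{=}(2(m+1),\vr b,\vr c,\vr d,\vr{\transf C})$ $\vr d$-computes the $\vr{\transf C}$-extension of $\ratio{\geq}(F^{\ell+1}(4), \vr b_2,\vr c_2,\vr d_2,\vr C) = \ratio{\geq}(\successor F(B), \vr b_2,\vr c_2,\vr d_2,\vr C)$, and since $\vr b,\vr c,\vr d$ are set to $0$ in the extension, this extended set is exactly $\ratio{\geq}(\successor F(B), \vr b_2,\vr c_2,\vr d_2,\vr{\transf C})$. As $B=2(m+1)$ ranges over all even numbers $\geq 4$, i.e.\ all of $\Ne$ (recall $\Ne = \Npar 4$), this is precisely the defining property of an $\successor F$-amplifier with input counters $\vr b,\vr c,\vr d$ and output counters $\vr b_2,\vr c_2,\vr d_2$.

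\textbf{Main obstacle.} The delicate point is the careful synchronisation between (i) the number of loop iterations of $\prog Q$, (ii) the number $m$ of zero tests, and (iii) the argument of $\successor F$, together with checking that the two gating zero tests per iteration ($\testz{\vr d_1}$ after $\prog P$, $\testz{\vr d_2}$ after $\linampl{}$) really do force the respective sub-runs to be $\vr d_1$- resp.\ $\vr d_2$-zeroing, so that the amplifier hypotheses on $\prog P$ and on $\linampl{}$ apply. A secondary subtlety is that the amplifier definition asserts $\vr d$-computed sets of the form $\ratio{\geq}(\cdot)$, which are unions of single-valuation $\ratio{=}$ sets, so at each stage one argues valuation-by-valuation (each concrete value $v$ in the range feeds the next amplifier as $\ratio{=}(v,\ldots)$) and takes unions at the end; monotonicity of $F$ (hence of $F^{\ell}$ and of $\successor F$) is what guarantees the union over all achievable $v$ again has the form $\ratio{\geq}(\successor F(B),\ldots)$ with the single threshold $\successor F(B)$ rather than a scattered set.
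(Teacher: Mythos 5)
Your proof is correct and follows essentially the same route as the paper: first establish that $\prog Q$, run from $\zeroval$ using exactly $2\ell+1$ zero tests, computes $\ratio{\geq}(F^{\ell+1}(4), \vr b_2, \vr c_2, \vr d_2, \vr C)$ (via the $4$-multiplier, the $F$-amplifier property of $\prog P$, and the identity-amplifier $\linampl{}$), then transfer to $\successor{\prog P}=\transf{\prog Q}$ via Lemma~\ref{lem:halt} with $B=2(m+1)=4(\ell+1)$ and $\successor F(B)=F^{\ell+1}(4)$. The mid-course hesitations (whether to start $\prog Q$ from $\zeroval$ or from the ratio of $B$) resolve to the paper's choice, so no gap remains.
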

\begin{proof}
Let $\prog P$ be an $F$-amplifier.
Thus for every $B \in \Ne$, 
$\computed {\prog P} {\ratio{=}(B, \vr b_1, \vr c_1, \vr d_1, \vr C)} {\vr d_1} =  
\ratio{\geq}(F(B), \vr b_2, \vr c_2, \vr d_2, \vr C)$.
%the program $\prog P$ $\vr d_1$-computes from $\ratio{=}(B, \vr b_1, \vr c_1, \vr d_1, \vr C)$ 
%the set $\ratio{\geq}(F(B), \vr b_2, \vr c_2, \vr d_2, \vr C)$.
The program $\linampl{}$, being an identity-amplifier, $\vr d_2$-computes from $\ratio{=}(B, \vr b_2, \vr c_2, \vr d_2, \vr C)$ the set 
$\ratio{\geq}(B, \vr b_1, \vr c_1, \vr d_1, \vr C)$.
Let $B = 4(\ell+1) \in\Ne$ for an arbitrary $\ell\in\N$.
As $\prog P$ is an $F$-amplifier and $\linampl{}$ is an identity-amplifier, we deduce:
\begin{claim}
$\prog Q$ computes from $\zeroval$ using $2\ell+1$ zero tests the set 
$\ratio{\geq}(F^{\ell+1}(4), \vr b_2, \vr c_2, \vr d_2, \vr{C})$.
\end{claim}
\noindent
As $\successor{{\prog P}} = \transf{\prog Q}$, by Lemma~\ref{lem:halt} we deduce:
\begin{claim}
$\computed {\successor{{\prog P}}} {\ratio{\geq}(4(\ell +1), \vr b, \vr c, \vr d, \vr {{\transf C}})} {\vr d}
= \ratio{\geq}(F^{\ell+1}(4), \vr b_2, \vr c_2, \vr d_2, \vr{C})$.
\end{claim}
\noindent
As $B\in \Ne$ was chosen arbitrarily and $\successor{F}(B) = F^{\ell+1}(4)$, 
the last claim says that $\successor{{\prog P}}$ is an
$\successor{F}$-amplifier.
\end{proof}

\begin{remark} \label{rem:doubling}
The program $\prog P$ appears twice in the body of $\successor{{\prog P}}$. 
This doubling can be easily avoided by re-structuring the loop using explicit \textbf{goto} commands.
%
%\PROG{0.45}{Program $\prog Q$}
%{prog:amplzero}{
%\State $\prog M$
%%\Loop
%  \State $\prog P$
%  \State \testz{\vr d_1}
%  \State \goto 5 8
%  \State $\linampl{}$  % $\linampl 1(\vr b_2, \vr c_2, \vr d_2, \vr b_1, \vr c_1, \vr d_1)$  % $\prog{\bar P'}$
%  \State \testz{\vr d_2}
%%\EndLoop
%%\State $\prog P$
%\State \goto 2 2
%%\State \testz{\vr d_1}
%\State \text{(a dummy command)}
%%\vspace{0.3cm}
%}
%\quad
%\PROG{0.4}{Program $\successor {{\prog P}}$}{prog:ampl}{
%\State $\prog{\overline M}$  % $\initeightmult$
%%\Loop
%  \State $\prog{\overline P}$
%  \State $\zerotestnew {\vr d_1}{}{}{}$
%  \State \goto 5 8
%  \State $\overline{\linampl{}}$  % $\linampl 1(\vr b_2, \vr c_2, \vr d_2, \vr b_1, \vr c_1, \vr d_1)$  % $\prog{\bar P'}$
%  \State $\zerotestnew {\vr d_2}{}{}{}$
%%\EndLoop
%%\State $\prog {\overline P}$
%%\State $\zerotestnew {\vr d_1}{}{}{}$
%\State \goto 2 2
%%  \State \dec{\vr c} \quad \sub{\vr d} 4
%%\EndLoop
%\State $\settozero$
%}
%
%\bigskip
%
%\noindent
In this way, the size of $\successor{{\prog P}}$ becomes larger than the size of $\prog P$ only by a constant.
\end{remark}

\begin{proof}[Proof of Theorem~\ref{thm:multrelaxed}]
We rely on %Claim~\ref{claim:Fsucc} and 
Lemma~\ref{lem:ampl-pres}.
Given $k\in \Npar 1$ and $n\in\Ne$ we compute, in time linear in $k$, 
the $\F k$-amplifier $\prog A_k$ with $3k+3$ counters $\vr C$, % = \set{\vr b_0, \vr c_0, \vr d_0, \ldots, \vr b_k, \vr c_k, \vr d_k}$, 
by $(k-1)$-fold application of the amplifier lifting transformation $\prog P \mapsto \successor{{\prog P}}$ described above, 
starting from the $\F 1$-amplifier $\linampl 2$ of Claim~\ref{claim:F1}.
The construction is linear in $k$ due to Remark~\ref{rem:doubling}.
% with input counters $\vr b_1, \vr c_1, \vr d_1$ and output ones $\vr b_0, \vr c_0, \vr d_0$.
%Each consecutive $\F i$-amplifier $\prog A_i$  arising in the course of the construction has
%input counters $\vr b_i, \vr c_i, \vr d_i$ and output ones $\vr b_{i-1}, \vr c_{i-1}, \vr d_{i-1}$.
Let $\vr b, \vr c, \vr d\in \vr C$ be input counters of $\prog A_k$.
Relying on Claim~\ref{claim:mult} in Section~\ref{sec:mult}, 
the $\F {k}(n)$-multiplier is obtained by pre-composing $\prog A_k$ with an $n$-multiplier
(e.g.~$\mult n(\vr b, \vr c, \vr d)$ from Section~\ref{sec:reach}) that outputs the set 
$\ratio{\geq}(n, \vr b, \vr c, \vr d, \vr C)$.
The whole construction is thus linear in $n$.

Finally we observe that the counter $\vr b$ is bounded by $n$ and hence can be eliminated:
we encode its values in control locations, by cloning the program into $n+1$ copies, where $i$th (for $i=0, \ldots, n$)
copy corresponds to the value $\vr b = i$.
%reproducing the program taking as locations the cross-product with $\set{0, \ldots, n}$.
% 
%\PROGnoname{0.5}{prog:outputIn}{
%\State \add{\vr b} n
%\Loop
%\State \add{\vr d} n \quad \inc{\vr c}
%\EndLoop
%}
%
The resulting program has $3k+2$ counters.
\end{proof}

\section{Hardness of the reachability problem (Proof of Theorem~\ref{thm:reach})} \label{sec:proofreach}

Relying on Lemma~\ref{lem:halt} and Theorem~\ref{thm:multrelaxed}, % (to be shown later), 
we prove in this section Theorem~\ref{thm:reach}.
%
%\para{Proof of Theorem~\ref{thm:reach}}
%
Fix $k\geq 3$.
The proof proceeds by a polynomial-time reduction
%, computable in %$\FFfun {k-1}$, 
from the following $\FF k$-hard problem:

\prob{$\F k$-bounded halting problem}
{A program $\prog P$ of size $n$ (w.l.o.g.~assume $n\in\Ne$) with 2 zero-tested counters}
{Does $\prog P$ have a complete run from $\zeroval$ that does at most $(\F k (n) -1)/ 2$ zero tests}
\vspace{-5mm}
\noindent
\begin{claim}
The above problem is $\FF k$-hard. 
\end{claim}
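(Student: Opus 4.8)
The plan is to reduce from the halting problem for Minsky machines (two-counter machines with zero tests), which is $\FF k$-hard when restricted to runs using at most a bounded number of steps; more precisely, from the standard $\FF k$-bounded halting problem for \emph{Turing machines} (or counter machines) running in time $\Ack k(n)$, which is $\FF k$-hard by definition of the Grzegorczyk hierarchy. First I would recall that by classical simulations (Minsky), a Turing machine running in time $t$ can be simulated by a two-counter machine — i.e.\ a program with $2$ zero-tested counters in our terminology — in time bounded by an elementary (indeed polynomial-in-$t$, after the usual Gödel-coding overhead, or $2^{O(t)}$ with the naive $2^a 3^b$ encoding) function of $t$; crucially, each step of the two-counter machine performs $O(1)$ zero tests, so the total number of zero tests is elementary in $t$. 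Hence a $T$-time-bounded halting problem reduces, in polynomial time, to the question of whether a $2$-zero-tested-counter program has a complete run from $\zeroval$ doing at most $g(T)$ zero tests, for some fixed elementary $g$.

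The key step is then to absorb the elementary blow-up $g$ into the fast-growing bound. I would fix the source problem to be: given a program $\prog P$ of size $n$, does it halt within $\Ack k(n)$ steps of a $\FF k$-hard machine? This problem is $\FF k$-complete, hence $\FF k$-hard, precisely because $\FF k = \bigcup_{p\in\FFfun{k-1}} \DTIME{\Ack k(p(n))}$ and $\Ack k$ dominates every function in $\FFfun{k-1}$ cofinally for $k\ge 3$, so an $\Ack k(n)$-time-bounded universal machine is complete for the class. Composing with the Minsky simulation, we get that the question ``does $\prog P$ (with $2$ zero-tested counters) have a complete run from $\zeroval$ with at most $g(\Ack k(n))$ zero tests'' is $\FF k$-hard. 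Now the point is that $g(\Ack k(n)) \le \Ack k(h(n))$ for a suitable fixed $h \in \FFfun{k-1}$ (one can take $h$ polynomial, or at worst $h(n) = \Ack{k-1}^{m}(n)$ for a constant $m$, using monotonicity of $\Ack k$ and the standard composition identities $\Ack k(a+b) \le \Ack k(a)\cdot\Ack k(b)$-type inequalities and $\Ack{k}\circ(\text{elementary}) \le \Ack{k}\circ\Ack{k-1}^{m}$); hence replacing the input $\prog P$ of size $n$ by a padded program of size $h(n)$ (which is still a polynomial, or at worst $\FF{k-1}$-computable, transformation, but we only need it to be computable within the reduction budget of $\FF k$) makes the bound $(\F k(h(n)) - 1)/2$, matching the form in the statement after the linear rescaling $\F k(4m) = 4\Ack k(m)$ of Claim~\ref{claim:ack}. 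Dividing by $2$ and subtracting $1$ is harmless: one slows down the simulating program by a factor $2$ so that it performs an even number of zero tests, and pads by one dummy zero test.

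Concretely, the steps in order are: (1) recall that the $\Ack k(n)$-time-bounded halting problem for a fixed universal machine is $\FF k$-hard (immediate from the definition of $\FF k$); (2) apply the Minsky two-counter simulation to turn it into a question about a program with $2$ zero-tested counters whose number of zero tests is bounded by $g(\Ack k(n))$ for a fixed elementary $g$; (3) use the closure and monotonicity properties of the hierarchy — in particular $g(\Ack k(n)) \le \Ack k(p(n))$ for some $p\in\FFfun{k-1}$, valid for $k\ge 3$ — together with the rescaling identity $\F k(4m)=4\Ack k(m)$, to rewrite the bound in the form $(\F k(N)-1)/2$ with $N = p'(n)$; (4) adjust by a constant factor and a dummy zero test to get exactly $(\F k(N)-1)/2$ with $N\in\Ne$; (5) observe that steps (2)–(4) are computable in time polynomial in $n$ (the only non-trivial point being that $p$ can be taken polynomial, or that the reduction is allowed to be $\FF{k-1}$-computable, which is within the reductions under which $\FF k$ is closed), so the composite is a legitimate hardness reduction.

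The main obstacle I expect is purely bookkeeping of the hierarchy: making sure the elementary overhead $g$ of the Minsky simulation, together with the $4\cdot$ rescaling between $\F k$ and $\Ack k$, really can be swallowed by a function $p$ in the lower class $\FFfun{k-1}$ (equivalently, that the padding needed is within the reduction budget), and handling the edge behaviour at $k=3$ where ``elementary'' and $\FF 3 = \tower$ coincide; this is where one must invoke the robustness of $\FF k$ under the choice of fast-growing hierarchy and the standard inequalities $\Ack k(a)\cdot \Ack k(b) \le \Ack k(a+b+c)$ for a constant $c$, and $\Ack{k-1}^{m}$ dominating every elementary function for $k\ge 3$. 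Everything else — the two-counter simulation, the parity fix, the linear rescaling — is routine.
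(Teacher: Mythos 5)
Your proposal is correct and follows essentially the same route as the paper: reduce from a fast-growing-bounded halting problem and use the classical G\"odel-coding simulation of arbitrarily many zero-tested counters by two. The one difference is that the paper takes as source the \emph{step}-bounded halting problem for programs with arbitrarily many zero-tested counters, where each simulated command costs $O(1)$ zero tests, so the bound $(\F k(n)-1)/2$ on steps carries over directly to the bound on zero tests with no blow-up; your detour through Turing machines forces the absorption step $g(\Ack k(n))\le \Ack k(n+O(1))$ for elementary $g$, which does go through for $k\ge 3$ but is precisely the hierarchy bookkeeping the paper's choice of source problem avoids.
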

\begin{claimproof}
Indeed, the standard $\FF k$-hard halting problem
(does a program $\prog P$ with \emph{arbitrarily many} zero-tested counters $\vr x_1, \ldots, \vr x_\ell$ 
have a complete run that does at most $(\F k (n) -1)/ 2$  \emph{steps}?)
reduces polynomially to the above one using the standard simulation of arbitrarily many zero-tested counters 
by 2 such counters $\vr y_1, \vr y_2$.
The simulation stores the values of all counters $\vr x_1, \ldots, \vr x_\ell$ on one of $\vr y_1, \vr y_2$ 
(e.g., using G\"odel encoding), and the simulation of each command
involves flushing the value of that counter to the other, followed by the zero test.
Thus a bound on time of computation is translated to the same bound on the number of zero tests.
\end{claimproof}

Given  $\prog P$ as above with two counters $\vr x, \vr y$, 
we transform it to a counter program $\prog P'$ with $3k+2$ counters $\vr C$ but without zero tests, 
such that
$\prog P$ has a complete run from $\zeroval$ that does at most $m = (\F k (n) -1)/ 2$ zero tests if and only if
$\prog P'$ has a $\set{\vr d, \vr z}$-zeroing run from $\zeroval$ (for some $\vr d, \vr z\in\vr C$).

First,  we post-compose $\prog P$ with a simple program $\prog L$ that first decrements $\vr x$
nondeterministically many times, and then zero tests it
nondeterministically many times:

\medskip

\PROGnoname{0.5}{prog:zeroloop}{
\Loop
\State \dec {\vr x}
\EndLoop
\Loop
\State \testz {\vr x}
\EndLoop
}

\bigskip
\noindent
Thus $\prog P$ has a complete run that does \emph{at most} $m$ zero tests if and only if 
the composed program $\comp{\prog P}{\prog L}$
has a complete run that does \emph{exactly} $m$ zero tests.
We will apply the transformation of Section~\ref{sec:transf} to the composed program $\comp{\prog P}{\prog L}$.
Let $\vr b, \vr c, \vr d$ be the three counters added in the course of the transformation.

Second, using Theorem~\ref{thm:multrelaxed}
we compute % in time linear in $n$ 
a $2(m+1)$-multiplier $\prog M$ (recall that $2(m+1) = \F k(n)$) with $3k+2$ counters $\vr C$
that $\vr z$-computes from $\zeroval$ the set $\ratio{\geq}(2(m+1), \vr b, \vr c, \vr d, \vr C)$, for some counter $\vr z$ 
different than $\vr x, \vr y$.
Thus $\vr z, \vr b, \vr c, \vr d \in\vr C$.

Finally, we define $\prog P'$ as a composition of
$\prog M$ with the transformed program $\transf{(\comp{\prog P}{\prog L})}$,
and get the required equivalence:
\begin{claim} \label{claim:equiv}
The following conditions are equivalent:
\begin{itemize}
\item
$\prog P$ has a complete run from $\zeroval$ that does at most $m$ zero tests;
\item
$\comp{\prog P}{\prog L}$ has a complete run from $\zeroval$ that does exactly $m$ zero tests;
\item
$\transf{(\comp{\prog P}{\prog L})}$ has a $\vr d$-zeroing run from $\ratio{\geq}(2(m+1), \vr b, \vr c, \vr d, \vr {C})$;
\item
$\prog P' = \comp {\prog M} {(\transf{\comp{\prog P}{\prog L})}}$
has a $\set{\vr z, \vr d}$-zeroing run from $\zeroval$.
\end{itemize}
\end{claim}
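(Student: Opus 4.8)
The plan is to prove Claim~\ref{claim:equiv} as a chain of three equivalences between consecutive conditions in the list, each of them an immediate consequence of a result already established: the first equivalence comes from the construction of $\prog L$, the second from Lemma~\ref{lem:halt}, and the third from Claim~\ref{claim:mult} together with Theorem~\ref{thm:multrelaxed}. All of the substance already lives in Lemma~\ref{lem:halt} and Theorem~\ref{thm:multrelaxed}; what is left is bookkeeping. Throughout, I use the trivial dictionary ``there exists an $\vr X$-zeroing run from $R$'' $\iff$ ``the set $\vr X$-computed from $R$ is non-empty''.

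For the equivalence of the first two conditions I would argue directly about $\prog L$, as previewed above the claim. If $\prog P$ has a complete run from $\zeroval$ doing $j \le m$ zero tests, extend it: run the first loop of $\prog L$ to bring $\vr x$ down to $0$ (legal, since counters stay nonnegative), then run the second loop $m-j$ times --- each of these $m-j$ zero tests on the already-empty $\vr x$ succeeds --- obtaining a complete run of $\comp{\prog P}{\prog L}$ doing exactly $m$ zero tests. Conversely, the $\prog P$-prefix of any complete run of $\comp{\prog P}{\prog L}$ is itself a complete run of $\prog P$, and it performs at most as many zero tests as the whole run; hence a complete run of $\comp{\prog P}{\prog L}$ with exactly $m$ zero tests yields a complete run of $\prog P$ with at most $m$ zero tests.

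For the equivalence of the second and third conditions I would apply Lemma~\ref{lem:halt} with $\comp{\prog P}{\prog L}$ in the role of ``$\prog P$''; this is admissible because $\comp{\prog P}{\prog L}$ uses zero tests only on $\vr x$ and $\vr y$ (the program $\prog L$ zero-tests only $\vr x$). The lemma equates the extension of the set computed by $\comp{\prog P}{\prog L}$ from $\zeroval$ using $m$ zero tests with the set $\vr d$-computed by $\transf{(\comp{\prog P}{\prog L})}$ from $\ratio{=}(2(m+1), \vr b, \vr c, \vr d, \vr C)$. The second condition asserts non-emptiness of the former set (before extension, but the extension operation plainly preserves (non-)emptiness) and the third condition asserts non-emptiness of the latter, and since $\ratio{=}(2(m+1), \ldots)$ and $\ratio{\geq}(2(m+1), \ldots)$ denote the same set of valuations, the two conditions coincide. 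For the equivalence of the third and fourth conditions I would use Claim~\ref{claim:mult}: by Theorem~\ref{thm:multrelaxed}, $\prog M$ is a $2(m+1)$-multiplier $\vr z$-computing $\ratio{\geq}(2(m+1), \vr b, \vr c, \vr d, \vr C)$ from $\zeroval$ with $\vr z \notin \set{\vr x, \vr y}$, while $\transf{(\comp{\prog P}{\prog L})}$ uses only the counters $\vr C \setminus \set{\vr z}$ (those of $\comp{\prog P}{\prog L}$ together with the three fresh counters $\vr b, \vr c, \vr d$ added by the transformation of Section~\ref{sec:transf}, which are exactly the input counters of $\prog M$). Applying Claim~\ref{claim:mult} with $\vr Y := \set{\vr d}$ shows that the set $\set{\vr d}$-computed by $\transf{(\comp{\prog P}{\prog L})}$ from $\ratio{\geq}(2(m+1), \vr b, \vr c, \vr d, \vr C)$ equals the set $\set{\vr z, \vr d}$-computed by $\prog P' = \comp{\prog M}{(\transf{\comp{\prog P}{\prog L}})}$ from $\zeroval$; the former is non-empty exactly when the third condition holds and the latter exactly when the fourth holds.

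I do not expect a genuine obstacle here: the only point requiring care --- and it is routine --- is lining up the counter sets of $\prog M$, $\transf{(\comp{\prog P}{\prog L})}$ and $\prog P'$ so that the hypotheses of Claim~\ref{claim:mult} are literally met (in particular that the three counters introduced by the Section~\ref{sec:transf} transformation are identified with the input counters of $\prog M$), and then quoting Lemma~\ref{lem:halt}, Claim~\ref{claim:mult} and Theorem~\ref{thm:multrelaxed} off the shelf.
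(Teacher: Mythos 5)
Your proof is correct and takes essentially the same route as the paper, which justifies the three consecutive equivalences in exactly the same way: the first by the construction of $\prog L$, the second by Lemma~\ref{lem:halt} (applied to $\comp{\prog P}{\prog L}$), and the third by Claim~\ref{claim:mult} with the multiplier supplied by Theorem~\ref{thm:multrelaxed}. The paper merely asserts these steps without the bookkeeping you spell out, and your added details (the ``non-emptiness'' dictionary, the counter-set alignment for Claim~\ref{claim:mult}) are all sound.
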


\noindent
The second and the third point are equivalent due to Lemma~\ref{lem:halt}, 
while the equivalence of the third and the
last point follows by Claim~\ref{claim:mult} in Section~\ref{sec:mult}.
%the fact that $\prog M$ 
%$\vr z$-computes from $\zeroval$ the set $\ratio{\geq}(2(m+1), \vr b, \vr c, \vr d, \vr C)$.

The program $\prog P'$ has $3k+4$ counters ($3k+2$ counters of $\prog M$ plus 
$\vr x, \vr y$) but,
since $k \geq 3$, this number can be decreased back to $3k+2$, by re-using some of $3k-2$  
counters from $\vr C' = \vr C - \set{\vr b, \vr c, \vr d, \vr z}$ in place of $\vr x, \vr y$.
The latter equivalence in Claim~\ref{claim:equiv} remains true, as $\vr z$-zeroing runs of $\prog M$ from $\zeroval$
are necessarily $\vr C'$-zeroing too, by the definition of multipliers. 

This completes the proof of Theorem~\ref{thm:reach}.
%\qed

\section{Final remarks}

Primarily, we propose a conceptual simplification of the \ackermann-hardness construction of~\cite{CO}.

As a secondary achievement, we improve the dimension-parametric 
lower bound for the \vass (Petri nets) reachability problem:
compared to $\FF k$-hardness in dimension $6k$~\cite{CO} and $4k+5$~\cite{L}, respectively, we obtain
$\FF k$-hardness already in dimension $3k+2$.
(We believe that by combining with the insights of~\cite{CO} one can further optimise our construction and 
lower the dimension by a small constant.)
The dimension $4k+5$ of~\cite{L} has been recently further improved to $2k+4$~\cite{L-arxiv}, thus beating ours.

The best known upper bound places the \vass reachability problem in dimension $k-4$ is in $\FF k$~\cite{LS}.
Establishing exact parametric complexity of the problem, i.e., closing the gap between dimensions $k-4$ and $2k+4$, 
arises therefore as an intriguing open problem.

Finally we remind that except for dimension 1 and 2, where the reachability problem seems to be 
well understood~\cite{BlondinFGHM15,EnglertLT16},
we know no additional complexity bounds for small fixed dimensions $k$ except for the lower bound
derived from dimension 2, and the generic $\FF {k+4}$ upper bound of~\cite{LS}.

\newpage

\bibliography{bib}

\begin{thebibliography}{10}

\bibitem{AngeliLS11}
David Angeli, Patrick~De Leenheer, and Eduardo~D. Sontag.
\newblock Persistence results for chemical reaction networks with
  time-dependent kinetics and no global conservation laws.
\newblock {\em {SIAM} Journal of Applied Mathematics}, 71(1):128--146, 2011.

\bibitem{BaldanCMS10}
Paolo Baldan, Nicoletta Cocco, Andrea Marin, and Marta Simeoni.
\newblock Petri nets for modelling metabolic pathways: a survey.
\newblock {\em Natural Computing}, 9(4):955--989, 2010.
\newblock URL: \url{https://doi.org/10.1007/s11047-010-9180-6}.

\bibitem{BlondinFGHM15}
Michael Blondin, Alain Finkel, Stefan G{\"{o}}ller, Christoph Haase, and Pierre
  McKenzie.
\newblock Reachability in two-dimensional vector addition systems with states
  is {PSPACE}-complete.
\newblock In {\em {Proc.~LICS}}, pages 32--43, 2015.

\bibitem{BojanczykDMSS11}
Miko{\l}aj Boja{\'{n}}czyk, Claire David, Anca Muscholl, Thomas Schwentick, and
  Luc Segoufin.
\newblock Two-variable logic on data words.
\newblock {\em {ACM} Trans. Comput. Log.}, 12(4):27:1--27:26, 2011.
\newblock URL: \url{http://doi.acm.org/10.1145/1970398.1970403}.

\bibitem{BouajjaniE13}
Ahmed Bouajjani and Michael Emmi.
\newblock Analysis of recursively parallel programs.
\newblock {\em {ACM} Trans. Program. Lang. Syst.}, 35(3):10:1--10:49, 2013.
\newblock URL: \url{http://doi.acm.org/10.1145/2518188}.

\bibitem{BurnsKY00}
Frank~P. Burns, Albert Koelmans, and Alexandre Yakovlev.
\newblock {WCET} analysis of superscalar processors using simulation with
  coloured {P}etri nets.
\newblock {\em Real-Time Systems}, 18(2/3):275--288, 2000.
\newblock URL: \url{https://doi.org/10.1023/A:1008101416758}.

\bibitem{ColcombetM14}
Thomas Colcombet and Amaldev Manuel.
\newblock Generalized data automata and fixpoint logic.
\newblock In {\em {FSTTCS}}, volume~29 of {\em LIPIcs}, pages 267--278. Schloss
  Dagstuhl, 2014.
\newblock URL: \url{https://doi.org/10.4230/LIPIcs.FSTTCS.2014.267}.

\bibitem{Crespi-ReghizziM77}
Stefano Crespi{-}Reghizzi and Dino Mandrioli.
\newblock {P}etri nets and {S}zilard languages.
\newblock {\em Information and Control}, 33(2):177--192, 1977.
\newblock URL: \url{https://doi.org/10.1016/S0019-9958(77)90558-7}.

\bibitem{stoc19}
Wojciech Czerwi{\'n}ski, S{\l}awomir Lasota, Ranko Lazic, J{\'{e}}r{\^{o}}me
  Leroux, and Filip Mazowiecki.
\newblock The reachability problem for {Petri} nets is not elementary.
\newblock In Moses Charikar and Edith Cohen, editors, {\em Proc. {STOC} 2019},
  pages 24--33. {ACM}, 2019.

\bibitem{jacm}
Wojciech Czerwinski, Slawomir Lasota, Ranko Lazic, J{\'{e}}r{\^{o}}me Leroux,
  and Filip Mazowiecki.
\newblock The reachability problem for {Petri} nets is not elementary.
\newblock {\em J. {ACM}}, 68(1):7:1--7:28, 2021.
\newblock URL: \url{https://doi.org/10.1145/3422822}, \href
  {http://dx.doi.org/10.1145/3422822} {\path{doi:10.1145/3422822}}.

\bibitem{CLO21}
Wojciech Czerwinski, Slawomir Lasota, and {\L}ukasz Orlikowski.
\newblock Improved lower bounds for reachability in vector addition systems.
\newblock In {\em {Proc.~ICALP}}, volume 198 of {\em LIPIcs}, pages
  128:1--128:15, 2021.

\bibitem{CO}
Wojciech Czerwi{\'n}ski and {\L}ukasz Orlikowski.
\newblock Reachability in vector addition systems is {Ackermann}-complete.
\newblock In {\em Proc. {FOCS} 2021}, 2021.
\newblock To appear.

\bibitem{DeckerHLT14}
Normann Decker, Peter Habermehl, Martin Leucker, and Daniel Thoma.
\newblock Ordered navigation on multi-attributed data words.
\newblock In {\em Proc.~{CONCUR}}, volume 8704 of {\em LNCS}, pages 497--511.
  Springer, 2014.
\newblock URL: \url{https://doi.org/10.1007/978-3-662-44584-6\_34}.

\bibitem{DemriFP16}
St{\'{e}}phane Demri, Diego Figueira, and M.~Praveen.
\newblock Reasoning about data repetitions with counter systems.
\newblock {\em Logical Methods in Computer Science}, 12(3), 2016.
\newblock URL: \url{https://doi.org/10.2168/LMCS-12(3:1)2016}.

\bibitem{EnglertLT16}
Matthias Englert, Ranko Lazi\'c, and Patrick Totzke.
\newblock Reachability in two-dimensional unary vector addition systems with
  states is {NL}-complete.
\newblock In {\em {Proc.~LICS}}, pages 477--484. {ACM}, 2016.
\newblock URL: \url{http://doi.acm.org/10.1145/2933575.2933577}.

\bibitem{EsparzaGLM17}
Javier Esparza, Pierre Ganty, J{\'{e}}r{\^{o}}me Leroux, and Rupak Majumdar.
\newblock Verification of population protocols.
\newblock {\em Acta Inf.}, 54(2):191--215, 2017.
\newblock URL: \url{https://doi.org/10.1007/s00236-016-0272-3}.

\bibitem{GantyM12}
Pierre Ganty and Rupak Majumdar.
\newblock Algorithmic verification of asynchronous programs.
\newblock {\em {ACM} Trans. Program. Lang. Syst.}, 34(1):6:1--6:48, 2012.
\newblock URL: \url{http://doi.acm.org/10.1145/2160910.2160915}.

\bibitem{GermanS92}
Steven~M. German and A.~Prasad Sistla.
\newblock Reasoning about systems with many processes.
\newblock {\em J. {ACM}}, 39(3):675--735, 1992.
\newblock URL: \url{http://doi.acm.org/10.1145/146637.146681}.

\bibitem{Greibach78a}
Sheila~A. Greibach.
\newblock Remarks on blind and partially blind one-way multicounter machines.
\newblock {\em Theor. Comput. Sci.}, 7:311--324, 1978.
\newblock URL: \url{https://doi.org/10.1016/0304-3975(78)90020-8}.

\bibitem{HL18}
Piotr Hofman and S{\l}awomir Lasota.
\newblock Linear equations with ordered data.
\newblock In {\em Proc.~{CONCUR}}, volume 118 of {\em LIPIcs}, pages
  24:1--24:17. Schloss Dagstuhl, 2018.
\newblock URL: \url{https://doi.org/10.4230/LIPIcs.CONCUR.2018.24}.

\bibitem{HopcroftP79}
John~E. Hopcroft and Jean{-}Jacques Pansiot.
\newblock On the reachability problem for 5-dimensional vector addition
  systems.
\newblock {\em Theor. Comput. Sci.}, 8:135--159, 1979.
\newblock URL: \url{https://doi.org/10.1016/0304-3975(79)90041-0}.

\bibitem{KKW14}
Alexander Kaiser, Daniel Kroening, and Thomas Wahl.
\newblock A widening approach to multithreaded program verification.
\newblock {\em {ACM} Trans. Program. Lang. Syst.}, 36(4):14:1--14:29, 2014.
\newblock URL: \url{http://doi.acm.org/10.1145/2629608}.

\bibitem{Kanovich95}
Max~I. Kanovich.
\newblock {P}etri nets, {H}orn programs, linear logic and vector games.
\newblock {\em Ann. Pure Appl. Logic}, 75(1--2):107--135, 1995.
\newblock URL: \url{https://doi.org/10.1016/0168-0072(94)00060-G}.

\bibitem{KarpM69}
Richard~M. Karp and Raymond~E. Miller.
\newblock Parallel program schemata.
\newblock {\em J. Comput. Syst. Sci.}, 3(2):147--195, 1969.
\newblock URL: \url{https://doi.org/10.1016/S0022-0000(69)80011-5}.

\bibitem{Kosaraju82}
S.~Rao Kosaraju.
\newblock Decidability of reachability in vector addition systems (preliminary
  version).
\newblock In {\em {Proc.~STOC}}, pages 267--281. {ACM}, 1982.
\newblock URL: \url{http://doi.acm.org/10.1145/800070.802201}.

\bibitem{Lambert92}
Jean{-}Luc Lambert.
\newblock A structure to decide reachability in {P}etri nets.
\newblock {\em Theor. Comput. Sci.}, 99(1):79--104, 1992.
\newblock URL: \url{https://doi.org/10.1016/0304-3975(92)90173-D}.

\bibitem{LerouxAG15}
H{\'{e}}l{\`{e}}ne Leroux, David Andreu, and Karen Godary{-}Dejean.
\newblock Handling exceptions in {P}etri net-based digital architecture: From
  formalism to implementation on {FPGA}s.
\newblock {\em {IEEE} Trans. Industrial Informatics}, 11(4):897--906, 2015.

\bibitem{Leroux10}
J{\'{e}}r{\^{o}}me Leroux.
\newblock The general vector addition system reachability problem by
  {P}resburger inductive invariants.
\newblock {\em Logical Methods in Computer Science}, 6(3), 2010.
\newblock URL: \url{https://doi.org/10.2168/LMCS-6(3:22)2010}.

\bibitem{Leroux11}
J{\'{e}}r{\^{o}}me Leroux.
\newblock Vector addition system reachability problem: a short self-contained
  proof.
\newblock In {\em {POPL}}, pages 307--316. {ACM}, 2011.
\newblock URL: \url{http://doi.acm.org/10.1145/1926385.1926421}.

\bibitem{Leroux12}
J{\'{e}}r{\^{o}}me Leroux.
\newblock Vector addition systems reachability problem {(A} simpler solution).
\newblock In {\em Turing-100}, volume~10 of {\em EPiC Series in Computing},
  pages 214--228. EasyChair, 2012.
\newblock URL: \url{http://www.easychair.org/publications/paper/106497}.

\bibitem{L}
J{\'{e}}r{\^{o}}me Leroux.
\newblock The reachability problem for {Petri} nets is not primitive recursive.
\newblock In {\em Proc. {FOCS} 2021}, 2021.
\newblock To appear.

\bibitem{L-arxiv}
J{\'{e}}r{\^{o}}me Leroux.
\newblock The reachability problem for {Petri} nets is not primitive recursive.
\newblock Technical report, 2021.
\newblock URL: \url{https://arxiv.org/abs/2104.12695}, \href
  {http://arxiv.org/abs/2104.12695} {\path{arXiv:2104.12695}}.

\bibitem{LerouxS15}
J{\'{e}}r{\^{o}}me Leroux and Sylvain Schmitz.
\newblock Demystifying reachability in vector addition systems.
\newblock In {\em {Proc.~LICS}}, pages 56--67, 2015.

\bibitem{LS}
J{\'e}r{\^o}me Leroux and Sylvain Schmitz.
\newblock Reachability in vector addition systems is primitive-recursive in
  fixed dimension.
\newblock In {\em Proc.~{LICS}}, pages 1--13. {IEEE}, 2019.

\bibitem{LiDV17}
Yuliang Li, Alin Deutsch, and Victor Vianu.
\newblock {VERIFAS:} {A} practical verifier for artifact systems.
\newblock {\em {PVLDB}}, 11(3):283--296, 2017.
\newblock URL: \url{http://www.vldb.org/pvldb/vol11/p283-li.pdf}.

\bibitem{lipton76}
Richard~J. Lipton.
\newblock The reachability problem requires exponential space.
\newblock Technical Report~62, Yale University, 1976.
\newblock URL: \url{http://cpsc.yale.edu/sites/default/files/files/tr63.pdf}.

\bibitem{lob1970hierarchies}
Martin~H. L{\"o}b and Stanley~S. Wainer.
\newblock Hierarchies of number-theoretic functions. {I}.
\newblock {\em Archiv f{\"u}r mathematische Logik und Grundlagenforschung},
  13(1--2):39--51, 1970.
\newblock URL: \url{https://doi.org/10.1007/BF01967649}.

\bibitem{Mayr81}
Ernst~W. Mayr.
\newblock An algorithm for the general {P}etri net reachability problem.
\newblock In {\em {STOC}}, pages 238--246. {ACM}, 1981.
\newblock URL: \url{http://doi.acm.org/10.1145/800076.802477}.

\bibitem{Mayr84}
Ernst~W. Mayr.
\newblock An algorithm for the general {P}etri net reachability problem.
\newblock {\em {SIAM} J. Comput.}, 13(3):441--460, 1984.
\newblock URL: \url{https://doi.org/10.1137/0213029}.

\bibitem{Meyer09}
Roland Meyer.
\newblock A theory of structural stationarity in the \emph{pi}-calculus.
\newblock {\em Acta Inf.}, 46(2):87--137, 2009.
\newblock URL: \url{https://doi.org/10.1007/s00236-009-0091-x}.

\bibitem{Petri62}
Carl~Adam Petri.
\newblock {\em Kommunikation mit Automaten}.
\newblock PhD thesis, Universit{\"a}t Hamburg, 1962.
\newblock URL:
  \url{http://edoc.sub.uni-hamburg.de/informatik/volltexte/2011/160/}.

\bibitem{SacerdoteT77}
George~S. Sacerdote and Richard~L. Tenney.
\newblock The decidability of the reachability problem for vector addition
  systems (preliminary version).
\newblock In {\em {STOC}}, pages 61--76. {ACM}, 1977.
\newblock URL: \url{http://doi.acm.org/10.1145/800105.803396}.

\bibitem{Schmitz16toct}
Sylvain Schmitz.
\newblock Complexity hierarchies beyond elementary.
\newblock {\em {TOCT}}, 8(1):3:1--3:36, 2016.

\bibitem{Schmitz16siglog}
Sylvain Schmitz.
\newblock The complexity of reachability in vector addition systems.
\newblock {\em {SIGLOG} News}, 3(1):4--21, 2016.

\bibitem{Aalst15}
Wil M.~P. van~der Aalst.
\newblock Business process management as the ``killer app'' for {P}etri nets.
\newblock {\em Software and System Modeling}, 14(2):685--691, 2015.
\newblock URL: \url{https://doi.org/10.1007/s10270-014-0424-2}.

\end{thebibliography}

\end{document}